\documentclass{article}
\usepackage[a4paper,
            bindingoffset=0.2in,
            left=1in,
            right=1in,
            top=1in,
            bottom=1in,
            footskip=.25in]{geometry}

\usepackage[T1]{fontenc}
\usepackage[utf8]{inputenc}
\usepackage[dvipsnames]{xcolor}
\usepackage{algpseudocode}
\usepackage{hyperref}

\usepackage{amsthm}
\usepackage{algorithm}
\usepackage{amsmath}
\usepackage{amssymb}
\usepackage{mathtools}
\usepackage{physics}
\usepackage{dsfont}
\usepackage{accents}
\usepackage{xfrac}
\usepackage{mathrsfs}
\usepackage{enumerate}
\usepackage{MnSymbol}
\usepackage{xcolor}
\usepackage[ligature, inference]{semantic}
\usepackage{ebproof}
\usepackage{graphicx}
\usepackage{stmaryrd}
\usepackage{xspace}
\usepackage{multirow}
\usepackage{array}
\usepackage{subcaption}
\usepackage{import}
\usepackage{tikz}
\usepackage{paralist}
\usepackage{tabularx}
\usepackage{adjustbox}
\usepackage{subcaption}
\usepackage{rotating}
\usepackage{stackengine}
\usepackage{marvosym}
\usepackage{multicol}
\usepackage{thm-restate}
\declaretheorem[name=Theorem]{theorem}
\declaretheorem[name=Lemma]{lemma}
\declaretheorem[name=Definition]{definition}

\declaretheorem[name=Example]{example}

\usetikzlibrary{positioning,fit}
\usetikzlibrary{quantikz2}
\usetikzlibrary{patterns}

\tikzset{
mystyle/.style ={preaction={fill, white}, pattern =north west lines, pattern color=black}
}
\tikzset{
mystyle2/.style ={preaction={fill, white},pattern =crosshatch dots, pattern color=black}
}

\usepackage{wrapfig}

\newcommand{\gatecolor}{gray!30}

\definecolor{procedures}{RGB}{202,82,213}
\definecolor{commands}{RGB}{9,10,184}

\newcommand{\codecomm}[1]{\text{\textcolor{gray}{/* #1 */}}}

\newcommand{\No}{\textnormal{Anc}}
\newcommand{\ia}{\mathrm{i}}
\newcommand{\ja}{\mathrm{j}}
\newcommand{\ka}{\mathrm{k}}
\newcommand{\xa}{\mathrm{x}}
\newcommand{\sa}{\mathrm{s}}
\newcommand{\bb}{\mathrm{b}}
\newcommand{\ea}{\mathrm{e}}
\newcommand{\da}{\mathrm{d}}
\newcommand{\LN}{\mathcal{L}(\mathbb{N})}
\newcommand{\N}{\mathbb{N}}
\newcommand{\B}{\mathbb{B}}
\newcommand{\Z}{\mathbb{Z}}
\newcommand{\q}{\mathrm{q}}

\renewcommand{\op}{\mathrm{op}}
\newcommand{\skp}{\tb{ skip};}
\newcommand{\tb}[1]{\text{\color{commands}\textnormal{\textbf{#1}}}}
\newcommand{\tbt}[1]{\text{\textnormal{\textbf{#1}}}}
\newcommand{\pn}[1]{\text{\color{procedures}\textnormal{\texttt{#1}}}}
\newcommand{\el}[1]{\textnormal{\small[$#1$]}}
\newcommand{\proc}{\pn{proc}}

\newcommand{\decl}{\tb{decl }}
\newcommand{\call}{\tb{call }}
\renewcommand{\ket}[1]{\ensuremath{\left| #1\right\rangle}}

\newcommand{\kpsi}{\ket{\psi}}
\newcommand{\kpsip}{\ket{\psi'}}
\newcommand{\kpsipp}{\ket{\psi''}}
\newcommand{\kpsik}{\ket{\psi_k}}
\newcommand{\qcase}[3]{\tb{qcase } #1 \tb{ of }\{0\to #2, 1\to #3\}}

\newcommand{\cif}[3]{\tb{if } #1 \tb{ then } #2 \tb{ else }#3}

\newcommand{\ST}{\mathrm{S}}
\newcommand{\CST}{\mathrm{Cst}}
\newcommand{\D}{\mathrm{D}}
\newcommand{\PR}{\ensuremath{\mathrm{P}}}

\newcommand{\U}{\mathrm{U}}
\newcommand{\asg}{{\ \textnormal{\texttt{\textasteriskcentered =}}\ }}

\newcommand{\contL}{\ensuremath{l_\text{M}}} 


\newcommand{\stdv}{\kpsi,A,l}

\newcommand{\width}{\textnormal{width}}
\newcommand{\fbqp}{\textnormal{\textsc{fbqp}}}


\newcommand{\wf}{\textnormal{\textsc{wf}}}
\newcommand{\wi}{\textnormal{\textsc{width}}}
\newcommand{\unif}{\textnormal{\textsc{basic}}}

\newcommand{\pfoq}{\textnormal{\textsc{pfoq}}}
\newcommand{\pbp}{\textnormal{\textsc{pbp}}}

\newcommand{\size}[1]{| #1 |}
\newcommand{\sem}[1]{\llbracket #1 \rrbracket}
\newcommand{\topbot}{\diamond}
\newcommand{\semto}[1]{\stackrel{#1}{\longrightarrow}}
\newcommand{\compile}{\textnormal{\textbf{compile}}\xspace}

\newcommand{\optimize}{\ensuremath{\textbf{optimize}}}


\newcommand{\level}{\textnormal{Time}}

\newcommand{\compgroup}[2]{\gategroup[#1,steps=#2,style={dashed,fill = gray!0, inner sep=0pt},background]{}}

\newcommand{\ttt}[1]{\text{\scriptsize\texttt{\textcolor{gray}{#1}}}}

\renewcommand\fbox{\fcolorbox{gray}{white}}

\bibliographystyle{plainurl}

\title{Branch Sequentialization in Quantum Polytime} 

\author{Emmanuel Hainry \and Romain Péchoux \and Mário Silva}

\date{%
    Université de Lorraine, CNRS, Inria, LORIA, F-54000 Nancy, France\\[2ex]%
    \today\\[2ex]
    {\small \texttt{\{hainry,pechoux,mmachado\}@loria.fr}}
}

\begin{document}

\maketitle

\begin{abstract}
Quantum algorithms leverage the use of quantumly-controlled data in order to achieve computational advantage. 
This implies that the programs use constructs depending on quantum data and not just classical data such as measurement outcomes. Current compilation strategies for quantum control flow involve compiling the branches of a quantum conditional, either in-depth or in-width, which in general leads to circuits of exponential size. This problem is coined as the branch sequentialization problem. We introduce and study a compilation technique for avoiding branch sequentialization on a language that is sound and complete for quantum polynomial time, thus, improving on existing polynomial-size-preserving compilation techniques.
\end{abstract}

\captionsetup{justification=centering}

\section{Introduction}
\label{s:intro}

\subsection{Motivation} 


Quantum computing is an emerging paradigm of computation, where quantum physical phenomena such as entanglement and superposition are used to obtain an advantage over classical computation. A testament to the richness of the field is the variety of computational models: quantum Turing machines~\cite{BernsteinVazirani}, quantum circuits~\cite{Yao1993,NielsenChuang}, measurement-based quantum computation~\cite{B09,DK06}, linear optical circuits~\cite{KLM01}, among others. 

In recent decades, a lot of effort has been put into developing high-level quantum programming languages that allow programmers to abstract away from the technicalities of these low-level models~\cite{S04,AG05}. Toward that end, several verification techniques such as type systems~\cite{FKS20} or categorical approaches for reasoning about program semantics~\cite{B06,H19} have been studied and developed to ensure the physical reality of compiled programs, for example, by ensuring that they satisfy the properties of quantum mechanics such as the no-cloning theorem~\cite{A09} and unitarity~\cite{DCGMV19}.

An important line of research in this area involves checking  polytime termination of quantum programs~\cite{DALLAGO2010377,Yamakami20, HPS23}, by showing that any program can be simulated by a polytime quantum Turing machine (QTM). As a consequence of Yao's Theorem~\cite{Yao1993}, demonstrating that polynomial-time QTMs are computationally equivalent to uniform and poly-sized quantum circuit families, such programs can be instantiated by a uniform family of quantum circuits of polynomial size, i.e., with a polynomial number of qubits and gates. However, direct compilation techniques often fail to ensure this property, due to the use of quantum branching, where the flow in a conditional is determined by the state of a qubit.

As an illustrative example, consider a \emph{quantum case}, of the shape $$\tbt{qcase }\q_1,\dots,\q_k\tbt{ of }\{i\to \ST_i\}_{i\in\{0,1\}^k},$$  which executes the superposition of the unitary transformations $U_i$ implemented by statements $\ST_i$, depending on the state $\ket{i}$ of the control qubits $\q_1,\dots,\q_k$. This statement can be simulated by a QTM  that, depending on $k$ symbols in its tape, performs the instructions of different QTMs $M_i$ in superposition. Consequently, its runtime is the maximum runtime among all $M_i$ (\cite[Branching Lemma]{BernsteinVazirani}). Instantiating this $\tbt{qcase}$ statement on quantum circuits can be done sequentially (in-depth) or in parallel (in-width). 
An in-depth encoding consists in applying each controlled-$U_i$ gate sequentially, thus implementing the unitary transformation $\sum_{i\in\{0,1\}^k} \ketbra{i}{i} \otimes U_i$. The depth of the resulting circuit is the sum of the depths of all $U_i$, which is exponential in the number $k$ of control qubits.
An alternative strategy, inspired by QRAM, allows for the execution of each $U_i$ in parallel in the circuit, using controlled swaps to perform routing of qubit addresses in linear depth on the number of control qubits. Since in this case the unitaries are parallelized, the circuit depth is the maximum among the depths of $U_i$, but this strategy results in circuits where the width scales as the sum of the complexities of $U_i$. Consequently,  this parallelization requires a number of ancillas that grows exponentially on the number of control qubits. In summary, any implementation, that is agnostic about the structure of each $U_i$, should incur such an exponential cost, as each \tbt{qcase} corresponds to the preparation of an arbitrary quantum state~\cite{ZLY22,GKW24}.

Automatic implementations of  quantum conditionals therefore produce circuits with a size that scales with the sum (rather than the maximum) of the complexity of each branch. This problem, coined \emph{branch sequentialization} in~\cite{Yuan2022}, leads, in many cases, to an exponential blow-up in circuit complexity, forcing the programmer to rewrite and optimize their code in order to make use of the program structure and improve the complexity bound. A challenging issue is, therefore, to develop quantum programming languages that avoid branch sequentialization by ensuring the correct circuit complexity for quantum control statements while providing full use of quantum control to the programmer~\cite{Yuan2022,yuan2024quantum}.

\subsection{Contribution} 
This paper solves the above issue by introducing a programming language, called $\pbp$ for Polynomially Branching Program, with quantum case and  first-order recursive procedures, on which compilation of quantum conditional does not lead to an exponential blow-up in circuit size. Our main contributions are as follows:
\begin{itemize}
\item We introduce a compilation strategy \compile{} (Figures~\ref{fig:compile} and~\ref{fig:optimize-steps}) into quantum circuits, show its soundness (Theorem~\ref{thm:soundcompile}), and show that it avoids branch sequentialization on $\pbp$ (Theorem~\ref{thm:new-compilation}), a language with restrictions on recursive calls to procedures. Toward that end, we formally define the time complexity $\level_\PR : \mathbb{N} \to \mathbb{N}$ of a program $\PR$ (Definition~\ref{def:timecomp}) as a map from the number $n$ of qubits to the maximal number of procedures called during a program execution on the Hilbert space $\mathbb{C}^{2^n}$. The time complexity of  a $ \tbt{qcase}$ statement is precisely the maximum of the time complexity of its branches and we show that branch sequentialization is avoided as the circuits generated by $\compile$ have size bounded asymptotically by the time complexity of the compiled program (Theorem~\ref{thm:new-compilation}).
\item A natural question concerns the expressive power of language $\pbp$. We also solve this issue by  showing that $\pbp$ is sound and complete for quantum polynomial time (Theorem~\ref{thm:soundness-and-completeness}). Consequently, any polytime quantum algorithm can be encoded by a $\pbp$ program. We illustrate the methodology through well-known examples such as the program \texttt{QFT} realizing the quantum Fourier transform (Example~\ref{fig:QFTu}).
\item We also discuss asymptotic bounds for a strict extension of $\pbp$, where the reduction in the input sorted set passed to procedure calls can be arbitrary. We show that this extension only increases the circuit complexity of the no-branch-sequentialization case by a linear factor (Theorem~\ref{thm:non-basic-compilation}).
\item  We also show that \compile{} strictly improves on existing compilation strategies on first-order quantum programs with quantum case~\cite{HPS23} by a polynomial speedup of arbitrarily large degree (Table~\ref{table:problems}).
\end{itemize}

\subsection{A bird's-eye view of our compilation strategy} 
We will now consider a simple illustrating example where branch sequentialization can exponentially worsen the size complexity of the compiled circuit, and show how our compilation strategy can preserve a polynomial bound.

The program $\texttt{PAIRS}$ defined in Figure~\ref{fig:BS-example1} consists in a simple call to procedure \texttt{pairs} (line 10) on a list $\bar{\q}$ of unique qubits. Let $\size{\bar{\q}}$ be the number of qubits in $\bar{\q}$.
By language design, the procedure \texttt{pairs} immediately terminates whenever $\size{\bar{\q}}=0$. The procedure enters a quantum case (line 3) when $\size{\bar{\q}} \geq 2$, otherwise it applies a $\text{NOT}$ gate to the single qubit (line 9). On line 3,  the program will branch depending on the state $\bar{\q}\el{1,2}$ of the first two qubits in $\bar{\q}$. Out of all four cases (lines 4-7), \texttt{pairs} only performs an operation when the first two qubits are in state $\ket{00}$ or $\ket{11}$, in which case it performs a recursive call on $\bar{\q}\ominus \el{1,2}$, the list obtained by removing the first and second qubits of $\bar{\q}$. 

\begin{wrapfigure}{r}{0.46\textwidth}
\centering
\vspace{-5mm}
\begingroup
\addtolength{\jot}{0mm}
\begin{align*}
\ttt{1}&\quad \tb{decl }\pn{pairs}(\bar{\q})\{\\
\ttt{2}& \qquad \tb{if }|\bar{\q}| \geq 2 \tb{ then }\\
\ttt{3}&\quad\qquad \tb{qcase }\bar{\q}\el{1,2}\tb{ of }\{\\
\ttt{4}&\qquad\qquad 00\to \tb{call }\pn{pairs}(\bar{\q}\ominus \el{1,2});\\
\ttt{5}&\qquad\qquad 01\to \tb{skip};\\
\ttt{6}&\qquad\qquad 10\to \tb{skip};\\
\ttt{7}&\qquad\qquad 11\to \tb{call }\pn{pairs}(\bar{\q}\ominus \el{1,2});\\
\ttt{8}&\qquad\qquad \}\\
\ttt{9}&\qquad \tb{else }\bar{\q}\el{1}\asg \text{NOT};\}\\
\ttt{10}&\quad ::\ \tb{call }\pn{pairs}(\bar{\q});
\end{align*}
\endgroup
\vspace{-8mm}
\captionsetup{justification=centering}
  \caption{Program $\texttt{PAIRS}$.}
  \label{fig:BS-example1}
\end{wrapfigure} 

Given an input state $\ket{xy}$, with $x\in\{0,1\}^*$ and $y\in\{0,1\}$,  \texttt{pairs} will apply a $\text{NOT}$ gate to the last qubit $y$ if and only if $x$ is a string in the language $(00\mid 11)^\ast$. Since \texttt{pairs} performs at most one call per branch, and consumes 2 qubits from its input while doing so, its time complexity is in $O(\size{\bar{\q}})$, when taking the \tbt{qcase} complexity as the maximum of each branch.

We now discuss the circuit compilation of \texttt{pairs}, when $\size{\bar{\q}} \geq 2$. In Figure~\ref{fig:strategies}, we give three strategies~\textbf{(a)}, \textbf{(b)}, and \textbf{(c)}, exemplifying different approaches. Strategies~\textbf{(a)} and \textbf{(b)} are automatic compilation strategies, which ignore the inner structure of the program -- the fact that the body of the two non-skip branches of the \textbf{qcase} are  identical and therefore encode the same unitary transformation.
Figure~\ref{fig:strategies}\textbf{(a)} represents an in-depth strategy, whereas Figure~\ref{fig:strategies}\textbf{(b)} implements an in-width strategy, making use of different registers $r_{00}$ and $r_{11}$, both initialized to $\ket{0^{\size{\bar{\q}}-2}}$ to perform each branch in parallel and then recombine the results in the same register. These two implementations require two recursive calls to \texttt{pairs} and, consequently, their size complexity is the sum of the complexities of the two branches. These two strategies are performing branch sequentialization and, in both cases, the compiled circuit has size exponential in $\size{\bar{\q}}$. 
In contrast, Figure~\ref{fig:strategies}\textbf{(c)} avoids this exponential blow-up, making use of the fact that the two branches are identical. This strategy is able to \emph{merge} the recursive calls into a single procedure call with the use of one ancilla. 
This is precisely the type of strategy used by our compilation algorithm $\compile$ (Section~\ref{s:compilation}). Although this example is relatively simple due to similar recursive calls, $\compile$ allows to deal with more involved situations, by merging recursive calls on different parameters.


\begin{figure}
\centering
\scalebox{0.65}{
\begin{tikzpicture}
\node[] (A) at (0,0) {
\begin{quantikz}[wire types = {b}]
\lstick{$\phantom{}\bar{\q}$} & \gate{\texttt{pairs}(\bar{\q})} & \rstick{\phantom{$\bar{\q}$}}
\end{quantikz}};

\node[] (B) at (-8,-2.6) {
\scalebox{0.8}{
\begin{quantikz}[wire types = {q,q,q}]
\lstick{$\bar{\q}\el{1}$} & & \octrl{1} & \ctrl{1} &\\
\lstick{$\bar{\q}\el{2}$} & & \octrl{1} & \ctrl{1} &\\
\lstick{$\bar{\q}\ominus\el{1,2}$} &\qwbundle{} & \gate{\texttt{pairs}(\bar{\q} \ominus \el{1,2})} & \gate{\texttt{pairs}(\bar{\q} \ominus \el{1,2})} &
\end{quantikz}}};

\node[] (D) at (0,-3.5) {
\scalebox{0.8}{
\begin{quantikz}[wire types = {q,q,q,q}, column sep = 3mm, row sep = 4mm]
\lstick{$\bar{\q}\el{1}$}  & & \octrl{1} & &\ctrl{1} & & &&\ctrl{1} & &\octrl{1} &\\
\lstick{$\bar{\q}\el{2}$} & &\octrl{2} & &\ctrl{4} & & & & \ctrl{4} & &\octrl{2} &\\
\lstick{$\bar{\q}\ominus\el{1,2}$} & \qwbundle{} & &\swap{2} & & \swap{4} &  & \swap{4} & &\swap{2} & &\\
\lstick{\ket{0}} & &\targ{} & \ctrl{0}& & & \ctrl{1} & & & \ctrl{0} &\targ{} &\\[-2mm]
\lstick{$r_{00}$} &\qwbundle{} & &\targX{} & & &\gate{\texttt{pairs}(\bar{\q} \ominus \el{1,2})} & & &\targX{} & &\\[-1mm]
\lstick{$\ket{0}$} & & & & \targ{} & \ctrl{0} & \ctrl{1} &\ctrl{0}& \targ{} && &\\[-2mm]
\lstick{$r_{11}$} &\qwbundle{} & & & &\targX{} &\gate{\texttt{pairs}(\bar{\q} \ominus \el{1,2})} & \targX{} & & & &
\end{quantikz}}
};

\node[] (C) at (8,-3) {
\scalebox{0.8}{
\begin{quantikz}[wire types = {q,q,q,q}]
\lstick{$\bar{\q}\el{1}$}  & \octrl{1} & \ctrl{1} & & \ctrl{1} & \octrl{1} &\\
\lstick{$\bar{\q}\el{2}$} & \octrl{1} & \ctrl{1} & & \ctrl{1} & \octrl{1} &\\
\lstick{$\ket{0}$} & \targ{} & \targ{} & \ctrl{1} &  \targ{} & \targ{} &\\
\wireoverride{n}& \lstick{$\bar{\q}\ominus\el{1,2}$}\wireoverride{n}  & \qwbundle{} &  \gate{\texttt{pairs}(\bar{\q}\ominus\el{1,2})}  &  && 
\end{quantikz}}};

\draw[->] (A) to [bend right=20] node[yshift = 5mm,xshift=0mm] {\textbf{(a)} in-depth} (B)  ;
\draw[->] (A) to [bend left=20] node[yshift = 5mm, xshift=1mm] {\textbf{(c)} merging} (C);
\draw[->] (A) to [bend right=0] node[yshift = 0mm, xshift= 12mm] {\textbf{(b)} in-width} (D);
 \end{tikzpicture}}
 \captionsetup{justification=centering}
 \caption{Compilation strategies.}
 \label{fig:strategies}
\end{figure}

\subsection{Related work}
Resource optimization in low-level models of quantum computation is a well-studied subject. Given a specific quantum circuit, it is possible to reduce its number of gates (or at least its number of non-Clifford gates) via techniques such as gate substitution and graph rewriting~\cite{NRSCM18,MDMN08,KvdW20,dBBW20}.
The study of resource optimization in the \emph{asymptotic} scenario is a relatively young research area as it involves reasoning about families of circuits and program structure. Different implicit characterizations of quantum complexity classes have been developed using a lambda-calculus~\cite{DALLAGO2010377}, function algebras~\cite{Yamakami20,yamakami2022} and a first-order programming language~\cite{HPS23}. The last of these provided a compilation strategy into quantum circuits with bounds on the circuit size based on the syntactic restrictions placed on the programs.

Compilation strategies for the quantum control statement (also called quantum switch case or quantum multiplexer) have been studied, for instance, in state preparation~\cite{GS01}, appearing in quantum machine learning~\cite{Harrow_2009} and Hamiltonian simulation~\cite{CW12}. These techniques either focus on optimizing number of qubits (circuit width)~\cite{ZLY22} or circuit depth~\cite{STYYZ23}, but in all cases correspond to circuits of exponential size. In order to improve on these bounds, one must restrict the set of programs to those that admit an efficient implementation, which can be deduced from the program structure. Optimized compilation techniques in that scenario can then be judged on expressivity and completeness: how easily can one write a program while ensuring the syntactical restrictions? Do these restrictions encompass all efficient programs? The field of implicit computational complexity is particularly well-suited to answer these questions~\cite{CdL24,CdL25}.

\section{First-Order Programming Language with Quantum Case}\label{s:language}

In this section, we introduce a programming language with quantum control and first-order recursive procedures.
After introducing its syntax and semantics, we introduce a restriction $\pbp$ (for Polynomially Branching Programs) on which branch sequentialization can be avoided.


\begin{figure}[h]
\hrulefill
$$ \begin{array}{lllll}
\text{(Integers)} &\ia,\ja,\ka & ::= & \xa \ | \ n \ | \ \ia \pm n \ | \ \size{\sa}\\
\text{(Booleans)} &\bb & ::= &  \ia \geq \ia \ | \  \cdots \ | \ \bb \wedge \bb \ | \ \cdots \\
\text{(Qubits)} & \q & ::= & \sa\el{\ia}\\
\text{(Sorted sets)} &\sa & ::= & \bar{\q} \ | \ 
\sa\ominus\el{\ia}\\
\text{(Statements)} & \ST &::= & \skp\ |\  \q \asg \U^f(\ja); \ | \ \ST\ \ST \ |\ \tb{if }\bb\tb{ then }\ST\tb{ else } \ST    \\
& & &\ |\  \qcase{\q}{\ST}{\ST}\ |\ \call \proc(\sa); & \\
\text{(Procedure declarations)} & \D &::= & \ \varepsilon \ |\  \decl \proc(\bar{\q})\{\ST\},\,\D \\
\text{(Programs)} & \PR(\bar{\q}) &::= &\,\D::\ST 
\end{array}$$
\hrulefill
\captionsetup{justification=centering}
\caption{Syntax of programs.}
\label{fig:syntax}
\end{figure}

\subsection{Syntax of  Programs}
The language includes four basic datatypes for expressions, whose corresponding expressions are described in Figure~\ref{fig:syntax}.
\begin{inparaenum}[(i)]
\item \emph{Integers}: Integer expressions are variables $\xa$, constant $n \in \mathbb{N}$, an addition by a constants $\ia \pm n$, or the size  $\size{\sa}$ of a sorted set $\sa$.
\item \emph{Booleans}: Such expressions $\bb$ are defined in a standard way.
\item \emph{Qubit}: qubits expressions are of the shape $\sa\el{\ia}$ which denotes the $\ia$-th qubit in sorted set $\sa$. \item \emph{Sorted sets}: lists of unique (i.e., non-duplicable) qubit pointers. Sorted-set expressions $\sa$ are either variables $\bar{\q}$ 
 or $\sa\ominus\el{\ia}$, the sorted set $\sa$ where the $\ia$-th element has been removed. 
\end{inparaenum}
Let $\sa\el{\ia_1,\ldots,\ia_n}$ be a shorthand for $\sa\el{\ia_1},\ldots,\sa\el{\ia_n}$. We also define syntactic sugar for pointing to the $n$-th \emph{last} qubit in a sorted set, by defining for any $n\geq1$, $\bar{\q}\el{-n}\triangleq \bar{\q}\el{|\bar{\q}|-n+1}$.

A program $\PR\triangleq \D :: \ST$ is defined in Figure~\ref{fig:syntax} as a list of (possibly recursive) procedure declarations $\D$, followed by a program statement $\ST$. 

Let Procedures be an enumerable set of procedure names. We write $\proc\in\PR$ to denote that $\proc$ appears in $\PR$.
 Each procedure of name $\proc \in \PR$ is defined by a (unique) procedure declaration $\tb{decl }\proc(\bar{\q})\{\ST\} \in \D$, which takes a sorted set $\bar{\q}$ as input parameter and executes the \emph{procedure statement} $\ST$. 
 We sometimes write $\ST^{\proc}$ to explicitly refer to the procedure statement $\ST$ of $\proc$.

Statements include the no-op instruction, unitary application, sequences, conditional, quantum case, and procedures calls. For the sake of universality~\cite{barenco1995elementary}, in a unitary application $\q \asg \U^f(\ja);$, $\U^f(\ja)$ is a unitary transformation that can take an integer $\ja$ and a polynomial-time approximable~\cite{Adleman1997} total function $f \in \mathbb{Z} \to [0,2\pi)$ as optional arguments.
The $f$ and $\ia$ can be omitted when they are not useful, as in a \text{NOT} gate.

 The quantum conditional $\tb{qcase }\sa\el{\ia}\tb{ of }\{0\to \ST_0,1\to\ST_1\}$ allows branching by executing statements $\ST_0$ and $\ST_1$ in superposition according to the state of qubit $\sa[\ia]$. The procedure call $\tb{call }\proc(\sa);$ runs procedure $\proc$ with \emph{sorted set} expression $\sa$. The quantum conditional can be extended to multiple qubits in a standard way as used in Figure~\ref{fig:BS-example1}.

We also make use of some syntactic sugar to describe statements encoding constant-time quantum operations. For instance, the $\text{CNOT}$, $\text{SWAP}$, 
as well as a controlled-phase shift gate are defined by:
\begin{align*}
\text{CNOT}(\q_1,\q_2)\ \triangleq&\ \tb{qcase }\q_1\tb{ of }\{0\to\tb{skip};,\,1\to\q_2 \asg \text{NOT};\}\\
\text{SWAP}(\q_1,\q_2)\ \triangleq& \ \text{CNOT}(\q_1,\q_2) \ \text{CNOT}(\q_2,\q_1)\ \text{CNOT}(\q_1,\q_2)\\
\textnormal{CPHASE}(\q_1,\,\q_2,\,\ia)\triangleq& \tb{ qcase }\q_1 \tb{ of }\{0\to\tb{skip};,1\to \q_2\asg\text{Ph}^{\lambda x.\pi/2^{x-1}}(\ia);\}
\end{align*}

Given a program $\PR\triangleq \D::\ST$, the call relation $\to_{\PR}\ \subseteq \textnormal{Procedures} \times \textnormal{Procedures}$ is defined for any two procedures $\pn{proc$_1$},$ $\pn{proc$_2$} \in \PR$ as $\pn{proc$_1$} \to_{\PR} \pn{proc$_2$}$ whenever $\pn{proc$_2$} \in \ST^{\pn{proc$_1$}}$. The relation $\succeq_{\PR}$ is then the transitive closure of $\to_{\PR}$, and $\sim_{\PR}$ denotes the equivalence relation where $\pn{proc$_1$} \sim_{\PR} \pn{proc$_2$} $ if $\pn{proc$_1$}\succeq_{\PR} \pn{proc$_2$}$ and $\pn{proc$_2$} \succeq_{\PR} \pn{proc$_1$}$ both hold.
A procedure $\proc $ is \emph{recursive} whenever $\proc\sim_{\PR} \proc$ holds.
The strict order $\succ_{\PR}$ is defined as $\pn{proc$_1$}\succ_{\PR} \pn{proc$_2$} $ if $\pn{proc$_1$}\succeq_{\PR} \pn{proc$_2$}$ and  $\pn{proc$_1$} \not\sim_{\PR} \pn{proc$_2$}$ both hold.

\subsection{Semantics of Programs} 

Let $\mathbb{B}$ denote the set of Booleans and $\LN$ denote the set of lists of natural numbers, $[]$ being the empty list. We interpret each basic type as follows:
\begin{align*}
\sem{\text{Integers}} &\triangleq \mathbb{Z} &
\sem{\text{Booleans}} &\triangleq \mathbb{B}  &\sem{\text{Qubits}} &\triangleq \mathbb{N} &
\sem{\text{Sorted Sets}} &\triangleq \LN
\end{align*}
Qubits are interpreted as integers (pointers) and sorted sets are interpreted as lists of pointers.
For each basic type $\tau$, the semantics of expressions is described standardly as a function 
$$\Downarrow_{\sem{\tau}} \ \in \tau \times \LN \to \sem{\tau}$$

$(\ea,l)\Downarrow_{\sem{\tau}} v$ holds when expression $\ea$ of type $\tau$ evaluates to the value $v\in\sem{\tau}$ under the context $l \in \LN$. $l$ is simply the sorted set of qubit pointers into consideration when evaluating $\ea$. For instance, we have that $(\bar{\q}\el{2},[1,4,5])\Downarrow_\mathbb{N} 4$ (the second qubit has index $4$), $(\bar{\q}\ominus \el{4},[1,4,5])\Downarrow_{\LN} []$ ($[]$ is used for errors on type $\LN$), $(\bar{\q}\el{4},[1,4,5])\Downarrow_\mathbb{N} 0$ (index $0$ encodes error on type $\mathbb{N}$), and $(\bar{\q}\ominus\el{3},[1,4,5])\Downarrow_{\LN} [1,4]$ (the third qubit has been removed). 

\begin{figure}[t]
\hrulefill
\begin{center}
$
\scalebox{0.8}{
 \begin{prooftree}
 \hypo{\phantom{\Downarrow_{\mathbb{N}}}}
 \infer1[
 ]{(\tb{skip};,\stdv)\semto{0} (\top,\stdv)}
 \end{prooftree}
 }
\quad 
\scalebox{0.8}{
 \begin{prooftree}
 \hypo{(\q,l)\Downarrow_{\mathbb{N}} n \notin A}
 \infer1[
 ]{(\q\asg \U^f(\ja);,\stdv)\semto{0} (\bot,\stdv)}
 \end{prooftree}
}
$
\\[0.2cm]
$
\scalebox{0.8}{
 \begin{prooftree}
 \hypo{({\q},l)\Downarrow_{\mathbb{N}} n \in A}
 \hypo{(\ja, l)\Downarrow_{\mathbb{Z}} k}
 \infer2[
 ]{(\q\asg \U^f(\ja);,\stdv)\semto{0} (  \top, I_{2^{n-1}}\otimes \sem{\U}(f)(k)\otimes I_{2^{l({\kpsi})-n}} \kpsi ,A, l )}
 \end{prooftree}
 }
$
\\[0.2cm]
$
\scalebox{0.8}{
  \begin{prooftree}
  \hypo{(\ST_1,\stdv)\semto{m_1} (\top,\kpsip,A,l)}
   \hypo{(\ST_2,\kpsip,A,l)\semto{m_2} (\topbot,\kpsipp,A,l)}
 \infer2[
 ]{(\ST_1\ \ST_2,\stdv)\semto{m_1+m_2} ( \topbot,\kpsipp,A,l)}
 \end{prooftree}
 }
$
\\[0.2cm]
$
\scalebox{0.8}{
  \begin{prooftree}
  \hypo{( \ST_1,\stdv)\semto{m} (\bot,\stdv)}
  \infer1[
  ]{( \ST_1\ \ST_2,\stdv)\semto{m}(\bot,\stdv)}
 \end{prooftree}
}
$
\quad
$
\scalebox{0.8}{
\begin{prooftree}
  \hypo{(\bb,l) \Downarrow_{\B} b}
  \hypo{(\ST_b,\stdv)\semto{m_b} (\topbot,\kpsip,A,l)}
  \hypo{\topbot \in \{\top,\bot\}}
  \infer3[
  ]{(\cif{\bb}{\ST_{\tb{true}}}{\ST_{\tb{false}}} ,\stdv)\semto{m_b}(\topbot ,\kpsip,A,l)}
  \end{prooftree}
}
$
\\[0.2cm]
$
\scalebox{0.8}{
  \begin{prooftree}
  \hypo{(\q,l) \Downarrow_{\mathbb{N}} n \in A}
  \hypo{\forall k \in \{0,1\}, \ (\ST_{k},\kpsi,A\backslash \{{n}\},l)\semto{m_k} (\top,\kpsik,A\backslash \{{n}\},l)}
  \infer2[
  ]{ ( \qcase{\q}{\ST_0}{\ST_1},\stdv)\semto{\max_{k \in \{0,1\}} m_k}(\top,\sum_{k} \ket{k}_{n}\! \bra{k}_{n} \kpsik,A,l)}
  \end{prooftree}
  }
$
\\[0.2cm]
$
\scalebox{0.8}{
  \begin{prooftree}
  \hypo{(\q,l) \Downarrow_{\mathbb{N}} n \in A}
  \hypo{\forall k \in \{0,1\}, \ (\ST_{k},\kpsi,A\backslash \{{n}\},l)\semto{m_k} (\topbot_k,\kpsik,A\backslash \{{n}\},l)}
\hypo{\bot\in\{\topbot_0, \topbot_1\}}
  \infer3[
  ]{( \qcase{\q}{\ST_0}{\ST_1},\stdv)\semto{\max_{k \in \{0,1\}} m_k}(\bot,\stdv)}
  \end{prooftree}
}
$
\\[0.2cm]
$
\scalebox{0.8}{
  \begin{prooftree}
  \hypo{({\q},l) \Downarrow_{\mathbb{N}} n \notin A}
  \infer1[
  ]{( \qcase{\q}{\ST_0}{\ST_1},\stdv)\semto{0}(\bot,\stdv)}
  \end{prooftree}
}
\quad
\scalebox{0.8}{
\begin{prooftree}
 \hypo{(\sa,l) \Downarrow_{\mathcal{L}(\mathbb{N})} [\,]}
 \infer1[
 ]{(\call \proc (\sa);,\stdv)\semto{1} (\top,\stdv)}
\end{prooftree}
}
$
\\[0.2cm]
$
\scalebox{0.8}{
\begin{prooftree}
  \hypo{(\sa,l) \Downarrow_{\mathcal{L}(\mathbb{N})} l' \neq [\,]}
  \hypo{(\ST^{\proc},\kpsi,A, l' )\semto{m} (\topbot,\kpsip,A,l')}
  \hypo{\topbot \in \{\top,\bot\}}
  \infer3[
  ]{(\call \proc(\sa);,\stdv)\semto{m+1} (\topbot,\kpsip,A,l)}
\end{prooftree}
}
$
\end{center}
\hrulefill
\caption{Semantics of statements.}
\label{table:operationalsemantics}
\end{figure}

Let $\mathcal{H}_{2^n} $ denote the Hilbert space of $n$ qubits $\mathbb{C}^{2^n}$, $\mathcal{P}(\mathbb{N})$ denote the powerset of $\mathbb{N}$. 

A \emph{configuration} $c$ over $n$ qubits is of the shape $(\ST,\stdv)$, for some statement $\ST \in \text{Statements}\cup \{\top,\bot\})$, $\top$ and $\bot$ being two special symbols denoting termination and error, respectively, $\kpsi$ is a quantum state in $ \mathcal{H}_{2^n}$, where $A \in \mathcal{P}(\mathbb{N})$ is the set of accessible qubit pointers, and where $l \in \mathcal{L}(\mathbb{N})$ is the list of qubit pointers under consideration.
Let $\text{Conf}_n$, be the set of configurations over $n$ qubits.
The initial configuration in $\text{Conf}_{n}$ of a program $\D::\ST$ on input state $\kpsi \in \mathcal{H}_{2^n}$ is $c_{init}(\kpsi) \triangleq (\ST,\kpsi,\{1,\dots,n\},[1,\dots,n]) $. A final configuration can be defined in the same way as 
$c_{final}(\kpsi) \triangleq (\top,\kpsi,\{1,\dots,n\},[1,\dots,n]) $.

Each unitary transformation $ \U$ of a unitary application $ \bar{\q}\el{\ia} \asg \U^f(\ja);$, comes with a function $\sem{\U}$ assigning a unitary matrix $\sem{\U}(f)(n) \in \mathbb{C}^{2 \times 2}$ to each integer $n$ and  polynomial-time approximable total function $f \in \mathbb{Z} \to [0,2\pi)$. 
For example, the gates of the quantum Fourier transform can be defined by $R_n \triangleq \sem{\textnormal{Ph}}(\lambda x. \pi/2^{x-1})(n)$ with $\sem{\textnormal{Ph}}(f)(n) \triangleq (\begin{smallmatrix} 1 &0 \\ 0 & e^{if(n)} \end{smallmatrix})$. The other basic unary gates are the \text{NOT} and the $R_Y$ gate.



The big-step semantics $\cdot \semto{\cdot} \cdot $ is defined in Figure~\ref{table:operationalsemantics} as a relation in $\bigcup_{n \in \mathbb{N}} \text{Conf}_n \times \mathbb{N} \times \text{Conf}_n$. 
Standard notations from quantum computation are used such as tensor product $\otimes$, $\bra{\psi}$ for the conjugate transpose of $\ket{\psi}$, or given a dimension $m$, $\ket{k}_n \triangleq I_{2^{n-1}}\otimes\ket{k}\otimes I_{2^{m-n}}$ for $k\in\{0,1\}$.
In Figure~\ref{table:operationalsemantics}, the set $A$ of accessible qubits is used to ensure that unitary operations on qubits can be physically implemented. For example, to ensure reversibility, in a quantum branch $\tb{qcase }{\sa\el{\ia}}\tb{ of }\{0\to\ST_0,\,1\to \ST_1\}$, statements $\ST_0$ and $\ST_1$ cannot access $\sa\el{\ia}$.

\begin{definition}\label{def:timecomp}
The \emph{time complexity} $\level_\PR : \mathbb{N} \to \mathbb{N}$ of a program $\PR\triangleq \D::\ST$ is defined by
$$\level_\PR(n) \triangleq \max_{\ket\phi\in\mathcal{H}_{2^n}} \{m \in \mathbb{N} \mid \exists \ket{\phi'}\in \mathcal{H}_{2^n},\ c_{init}(\ket{\phi}) \semto{m}  c_{final}(\ket{\phi'})\}.$$
\end{definition}
Intuitively, when $c \semto{m} c'$ holds, the \emph{time complexity} $m$ is an integer corresponding to the maximum number of procedure calls performed over each (classical and quantum) branch during the evaluation of a configuration $c \in \text{Conf}_n$. 
We write $\sem{\PR}(\kpsi) = \kpsip$, whenever $c_{init}(\kpsi) \semto{m}c_{final}(\kpsip)$ holds for some $m$. If the program $\PR$ terminates on any input (i.e., always reaches a final configuration) then $\sem{\PR}$ is a total function on quantum states.

\begin{example}\label{ex:pairs} For the program $\texttt{PAIRS}$ of Figure~\ref{fig:BS-example1}, $\level_\texttt{PAIRS}(n)=\lfloor \frac{n}{2}\rfloor+1$ since each procedure call removes two qubits until it reaches a sorted set $\bar{\q}$ such that $|\bar{\q}| \leq 1$ (depending on whether $n$ is odd or even) and for both sizes there are no more procedure calls.
\end{example}

\subsection{Polynomial Branching Programs}

\setlength{\tabcolsep}{2pt}
\begin{figure}[t]
\centering
\fbox{
\scalebox{0.9}{
\begin{tabular}{c}
{
\begin{tabular}{rlrlrl}
\ttt{1}&$\quad\tb{decl }\pn{qft}(\bar{\q})\{$ & \ttt{6} & $\quad\tb{decl }\pn{rot}(\bar{\q})\{$ & \ttt{11} & $\quad\tb{decl }\pn{shift}(\bar{\q})\{$\\
\ttt{2}&$\qquad \bar{\q}\el{1}\asg \text{H};$ & \ttt{7} & $\qquad \tb{if }|\bar{\q}|>1 \tb{ then }$ & \ttt{12} & $\qquad \tb{if }|\bar{\q}|>1 \tb{ then }$\\
\ttt{3}&$\qquad \tb{call }\pn{rot}(\bar{\q});$ & \ttt{8} & $\quad\qquad \textnormal{CPHASE}(\bar{\q}\el{-1},\,\bar{\q}\el{1},\,|\bar{\q}|)\ \ $ & \ttt{13} & $\quad\qquad\text{SWAP}(\bar{\q}\el{1},\bar{\q}\el{-1})$\\
\ttt{4}&$\qquad \tb{call }\pn{shift}(\bar{\q});$ &\ttt{9}  & $\quad\qquad \tb{call }\pn{rot}(\bar{\q}\ominus\el{-1});$  & \ttt{14}  & $\quad\qquad\tb{call }\pn{shift}(\bar{\q} \ominus \el{-1});\ \ $ \\
\ttt{5}&$\qquad \tb{call }\pn{qft}(\bar{\q}\ominus\el{-1});\},\ \ $ & \ttt{10} &$\qquad\tb{else skip};\},\ \ $& \ttt{15} &$\qquad\tb{else skip};\}$ \\
\ttt{16} & $\quad::$\\
\ttt{17} & \multicolumn{4}{l}{$\quad\tb{call }\pn{qft}(\bar{\q});$}
\end{tabular}
}
\\
{\begin{scalebox}{0.9}{
\begin{quantikz}[wire types = {q,q,q,q}, row sep = 4mm, column sep = 4mm]
 &\gate{\text{H}} & \gate{\text{R}_4} & \gate{\text{R}_3} &  \gate{\text{R}_2} & \permute{4,1,2,3}\gategroup[4,steps=1,style={draw=gray, inner	sep=-1pt},background]{\pn{shift}} & \gate{\text{H}} & \gate{\text{R}_{3}}&\gate{\text{R}_2} & \permute{3,1,2}\gategroup[3,steps=1,style={draw=gray, inner sep=-1pt},background]{} & \gate{\text{H}}& \gate{\text{R}_2} & \permute{2,1}\gategroup[2,steps=1,style={draw=gray, inner sep=-1pt},background,label style={label position=below,anchor=north,yshift=-0.2cm}]{} &  \gate{\text{H}}  &  \\
 &   & &  & \ctrl{-1} & &  &   & \ctrl{-1} &  &   &    \ctrl{-1} &  & &   \\
 &   &  & \ctrl{-2}  & &    &  &  \ctrl{-2} &  & & &  &  &  &   \\
 & & \ctrl{-3} &  &   &    &  &    & &   & & &  &     &   \end{quantikz}}\end{scalebox}}\\
 \\
\end{tabular}}}
\caption{Program $\texttt{QFT}$ for the quantum Fourier transform.}
\label{fig:QFTu}
\end{figure}

We define three restrictions on the programming language to consider only a strict subset, called $\pbp$ for Polynomial Branching Programs, on which branch sequentialization can be avoided.
First, we define a well-foundedness criterion to consider only terminating programs. A program $\PR$ is \emph{well-founded} if each recursive procedure call removes at least one qubit in its parameter. \wf{} denotes the set of well-founded programs. 
Then, we define a criterion to exclude programs with exponential runtime. Toward that end, the notion of \emph{width of a procedure} $\proc$ in a program $\PR$ is introduced. 

\begin{definition}
Given a program $\PR$, the \emph{width} of a procedure $\pn{proc} \in \PR$, denoted $\width_{\PR}(\pn{proc})$, is defined as $\width_{\PR}(\pn{proc}) \triangleq w^{\pn{proc}}_{\PR}(\ST^{\pn{proc}})$, where $w_{\PR}^{\pn{proc}}(\ST)$  is defined inductively as follows:

\begin{align*}
w_{\PR}^{\pn{proc}}(\tb{skip};)&\triangleq 0\\
w_{\PR}^{\pn{proc}}(\q\emph{\asg} \U^f(\ia);)&\triangleq 0,\\
w_{\PR}^{\pn{proc}}(\ST_1 \ \ST_2)&\triangleq w^{\pn{proc}}_{\PR}(\ST_1)+ w^{\pn{proc}}_{\PR}(\ST_2),\\ 
w_{\PR}^{\pn{proc}}(\tb{if }\bb\tb{ then }\ST_0\tb{ else }\ST_1)&\triangleq \max(w^{\pn{proc}}_{\PR}(\ST_0),w^{\pn{proc}}_{\PR}(\ST_1)),\\
w_{\PR}^{\pn{proc}}(\tb{qcase }{\q}\tb{ of }\{0\to\ST_0,\,1\to\ST_1\})&\triangleq \max(w^{\pn{proc}}_{\PR}(\ST_0),w^{\pn{proc}}_{\PR}(\ST_1)),\\
w_{\PR}^{\pn{proc}}(\tb{call }\pn{proc'}(\sa);)&\triangleq \begin{cases} 1& \text{if }\pn{proc}\sim_{\PR} \pn{proc'},\\ 0 &\text{otherwise}. \end{cases}
\end{align*}
 Let $\wi_{\leq 1}$ be the set of programs with procedures of width at most $1$.
\end{definition}
 Programs of width $1$ are inherently polynomial as they cannot perform an exponential number of procedure calls in sequence. However the total number of calls in superposition may be exponential for such programs by definition of the width for conditional and quantum case.
 
 Finally, for the purpose of our compilation process, we impose further syntactical conditions on programs. We restrict our attention to what we will call \unif{} programs. Let \unif{} denote the set of programs where each call to a procedure is performed either on the variable $\bar{\q}$ or on a unique sorted set $\sa$ that is fixed for the whole program.
 
 \begin{definition}[Polynomially Branching Programs]
The set $\pbp$ of \emph{polynomially branching programs} is defined as  $\pbp{}\triangleq \wf \cap \wi_{\leq 1} \cap \unif{}$.
 \end{definition}
 
Notice that $\pbp$  is strictly included in the programming language of~\cite{HPS23}, that roughly corresponds to $\wf \cap \wi_{\leq 1}$, where procedure calls can take an extra integer parameter.

\begin{example}\label{ex:qft} The \textnormal{\texttt{QFT}} program written in Figure~\ref{fig:QFTu} is in $\pbp{}$. Indeed,  the program $\textnormal{\texttt{QFT}}$ is in $\unif{}$ as calls are only performed on either $\bar{\q}$ or $\bar{\q}\ominus\el{-1}$. It is also in  $\wf{}$ as all recursive calls are performed on parameter $\bar{\q}\ominus\el{-1}$. Finally, it is in $\wi_{\leq 1}$ as $\width_\textnormal{\texttt{QFT}}(\pn{qft})=\width_\textnormal{\texttt{QFT}}(\pn{rot})=\width_\textnormal{\texttt{QFT}}(\pn{shift})=1$.
\end{example}
 
The restriction to programs in $\pbp$ does not impact negatively the expressive power of our study from an extensional viewpoint as, by Theorem~\ref{thm:soundness-and-completeness}, $\pbp$ is sound and complete for the class \fbqp{} of functions in $\{0,1\}^* \to \{0,1\}^*$, i.e., functions that can be approximated with at least $2/3$ probability by a quantum Turing machine running in polynomial time~\cite{BernsteinVazirani,Yamakami20}.

\section{Compilation Strategy} \label{s:compilation}
We now present the compilation algorithm  from $\wf \cap \wi_{\leq 1}$ to quantum circuits based on the merging strategy of Figure~\ref{fig:strategies}\textbf{(c)}. 
Compilation is restricted to programs in $\wf \cap \wi_{\leq 1}$ for two reasons.
The well-foundedness criterion $\wf$ ensures that the compilation always terminates. The restriction to $\wi_{\leq 1}$ prevents exponential blow-up.
Note however that, in Theorem~\ref{thm:new-compilation}, the $\pbp$ restrictions are required to avoid the branch sequentialization problem.

\subsection{Anchoring and Merging}

\begin{figure}[b]
\setlength{\tabcolsep}{1mm}
\hrulefill

\begin{center}
\includegraphics[scale=0.95]{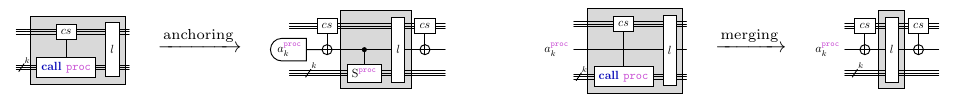}
\end{center}
\hrulefill
\caption{Anchoring and merging.}
\label{fig:example-legend}
\end{figure}

A \emph{control structure} $cs$ is a partial map in $\N\to\{0,1\}$. Intuitively, $cs$ represents a map from qubit pointers to their control values for a controlled gate and will also be used as a shorthand notation in circuits. For $n \in \N$ and $k \in \{0,1\}$, let $cs[n := k]$ be the control structure obtained from $cs$ by setting $cs(n) = k$.  We denote by $dom(cs)$ the domain of the control structure and by $\cdot$, the control structure such that $dom(\cdot)=\emptyset$.

During the compilation of the program statement, every recursive call is handled as follows: if it is the first call with this procedure name and input size, an ancilla is created and its procedure statement is compiled under a control on this ancilla (\emph{anchoring}), otherwise this procedure call has already been compiled and the control structure of the new call is sent into the corresponding ancilla (\emph{merging}).
Anchoring and merging are represented in Figure~\ref{fig:example-legend} as rewrite rules on quantum circuits for
a procedure call $\tb{call }\proc(\sa)$, $cs$ denoting the control structure corresponding to the (possibly nested) quantum cases, where the procedure call occurs.
$a_k^\proc$ represents the ancilla introduced (in case of anchoring) or reused (in case of merging). The integer $k$ refers to the number $\size{\sa}$ of qubits in the procedure call. The grey box materializes the controlled statements left to compile.

 \subsection{The compile Algorithm}
\label{ss:algorithm}
 
 \begin{figure}[t]
\hrulefill

\centering
$
\scalebox{0.7}{
\begin{quantikz}[wire types = {b,b}, row sep = 2mm,column sep = 2mm]
& \gate{cs}\vqw{1}\compgroup{2}{1}&\\
& \gate{\tb{skip};} &
\end{quantikz}}\longrightarrow
\scalebox{0.7}{\begin{quantikz}[wire types = {b}, column sep = 2mm]
  & &
\end{quantikz}} 
$
\qquad 
$
\scalebox{0.7}{
\begin{quantikz}[wire types = {b,q,b}, row sep = {8mm,between origins},column sep = 2.5mm]
 & \gate{cs}\vqw{1}\compgroup{3}{1} &\\
\lstick{$n$} & \gate[2, style ={inner sep = 0mm}]{\bar{\q}\el{n}\asg \U^f(m);} &\\[-2mm]
&  &
\end{quantikz}}\longrightarrow
\scalebox{0.7}{
\begin{quantikz}[wire types = {b,q,b}, row sep = {7mm,between origins},column sep = 2.5mm]
&\gate{cs}\vqw{1}&  \\
\lstick{$n$} & \gate{\sem{\U}(f)(m)} & \\
&  & 
\end{quantikz}}
$
\qquad 
$
\scalebox{0.7}{
\begin{quantikz}[wire types = {b,b}, column sep = 2mm]
& \gate{cs}\vqw{1}\compgroup{2}{1} & \\
& \gate{\ST_1 \ \ST_2} & 
\end{quantikz}
}
\longrightarrow
\scalebox{0.7}{
\begin{quantikz}[wire types = {b,b}, column sep =2mm]
  &\gate{cs}\vqw{1}\compgroup{2}{1} &[3mm] \gate{cs}\vqw{1} \compgroup{2}{1}&\\
& \gate{\ST_1} & \gate{\ST_2} &
\end{quantikz}}
$
\\[0.2cm]
$
\scalebox{0.7}{
\begin{quantikz}[wire types = {b,b}, row sep = 2mm,column sep = 2mm]
 & \gate{cs}\vqw{1}\compgroup{2}{1}&\\
& \gate{\tb{if }b\tb{ then }\ST_1\tb{ else }\ST_0} &
\end{quantikz}}
\longrightarrow
\scalebox{0.7}{
\begin{quantikz}[wire types = {b,b},column sep = 2mm]
 &\gate{cs}\vqw{1}\compgroup{2}{1}& \\
& \gate{\ST_b} & 
\end{quantikz}} 
$
\qquad
$
 \scalebox{0.75}{
\begin{quantikz}[wire types = {b,q,b}, row sep = 2mm,column sep = 2mm]
 & \gate{cs}\vqw{1}\compgroup{3}{1} &\\
\lstick{$n$}& \gate[2, disable auto height]{\begin{matrix}\tb{qcase }\bar{\q}\el{n}\tb{ of }\\
\{0\to\ST_0,\,1\to\ST_1\}\end{matrix}} &\\
& &
\end{quantikz}}
\longrightarrow
\scalebox{0.75}{
\begin{quantikz}[wire types = {b,q,b},row sep =3mm,column sep = 2mm]
 &\gate{cs}\vqw{1}\compgroup{3}{1} &[2mm]&\gate{cs}\vqw{1}\compgroup{3}{1} &\\
\lstick{$n$}& \octrl{1} & &\ctrl{1} &\\
& \gate{\ST_0} & &\gate{\ST_1}   &
\end{quantikz}} 
$
\\[0.2cm]
$
\scalebox{0.65}{\vspace{2mm}
\begin{quantikz}[wire types = {b,b},row sep = 2mm,column sep = 2mm]
&  \gate{cs} \vqw{1}\compgroup{2}{1} & \\
 &\gate{\tb{call }\proc(\sa);} & 
\end{quantikz}}
\longrightarrow\left\{
\scalebox{0.6}{
\begin{quantikz}[wire types = {b,b},column sep = 2mm]
 & \gate{cs} \vqw{1}\compgroup{2}{1} &\\
& \gate{\ST^\proc\{\sa/\bar{\q}\}}& 
\end{quantikz}}
\ \ \text{\footnotesize if $w_\PR^\proc(\ST^\proc)=0,$}\quad
\scalebox{0.6}{
\begin{quantikz}[wire types = {b,b},column sep = 4mm]
 &     \gate{cs} \vqw{1} \gategroup[2,steps=3,style={inner sep=0pt,fill=\gatecolor,draw=black},background]{$\proc$}  & \gategroup[2,steps=1,style={mystyle, inner sep=-2pt},background]{}  & &\\
 & \gate{\text{S}^\proc\{\sa/\bar{\q}\}}&   & &
\end{quantikz}}\quad\text{\footnotesize\text{otherwise.}}\right\}
$

\hrulefill
\caption{Rewrite rules of \compile{}.}
    \label{fig:compile}
\end{figure}
 Let controlled statements be pairs of the shape $(cs,\ST)$, for some control structure $cs$ and some statement $\ST$.
 In the compilation process, controlled statements $(cs,\ST)$ are used to represent a statement $\ST$ that remains to be compiled into a quantum circuit $C$ together with a control structure $cs$, representing the qubits controlling the compiled circuit $C$. 
The compilation algorithm $\compile$ is described by the rewrite rules of Figure~\ref{fig:compile}.  
Generating the circuit corresponding to the program $\PR=\D::\ST$ will consist in running $\compile$ on the controlled statement $(\cdot,\ST)$ for a fixed list of qubits pointers $[1,\ldots,n]$ (hence, an input of fixed size $n$).
We denote by $\compile{}(\PR,n)$ the circuit obtained for program $\PR$ on size $n$.
The algorithm standardly generates the quantum circuit corresponding to a $\wf \cap \wi_{\leq 1}$ program inductively on the statement $\ST$. Indeed, in the rules of Figure~\ref{fig:compile} when $\compile$ is called on a given controlled statement $(cs,\ST)$, an inductive call to $\compile$ is performed on controlled statements whose statements are sub-statements of $\ST$. The two base case are the rules for the skip statement and the unitary application. In these cases, the compilation just outputs the identity circuit and a controlled gate computing the unitary, respectively. The rules for sequence and quantum case perform two inductive calls to $\compile$ on each branch of the statement.
The rule for quantum case is the only rule that directly performs changes on the control structure.
In the particular case of a call to a recursive procedure (i.e., when $w_\PR^\proc(\ST^\proc)>0$), $\compile$  calls the $\optimize$ subroutine to perform anchoring and merging. This call to $\optimize$ is highlighted in Figure~\ref{fig:compile} through the use of a shaded square $\begin{quantikz} \gate[style = {fill=\gatecolor, inner sep=0pt}]{}\end{quantikz}$, which takes the procedure name $\proc$ as superscript. We call this process the \emph{optimization} of procedure $\proc$.

\begin{figure}[t]
\hrulefill

 \begin{minipage}{0.98\textwidth}
\begin{subfigure}[t]{\textwidth}
\vspace{3mm}
\centering
$
\scalebox{0.65}{
\begin{quantikz}[wire types = {b,b},column sep =4mm]
\lstick{} &  \gate{cs} \vqw{1}\gategroup[2,steps=3,style={inner
sep=2pt,fill=\gatecolor,draw=black},background]{$\proc$} &\gate[2]{l} &\gate[2,style={mystyle, inner sep=0pt}]{} &\\
\lstick{}  & \gate{\ST_1\ \ST_2} & & & 
\end{quantikz}
}
\longrightarrow  
\left\{
\scalebox{0.6}{
\begin{quantikz}[wire types = {b,b},column sep =4mm]
\lstick{} &  \gate{cs} \vqw{1}\gategroup[2,steps=3,style={inner
sep=2pt,fill=\gatecolor,draw=black},background]{$\proc$} &\gate[2]{l} &\gate[2,style={mystyle, inner sep=0pt}]{}&[2mm] \gate{cs} \vqw{1}\compgroup{2}{1}   &\\
\lstick{}  & \gate{\ST_1} & & & \gate{\text{S}_2} & 
\end{quantikz}
}
\text{\footnotesize { if $w_\PR^\proc(\ST_1)=1,$}}
\scalebox{0.6}{
\begin{quantikz}[wire types = {b,b},column sep =4mm]
\lstick{} &\gate{cs} \vqw{1}\compgroup{2}{1}   &[2mm] \gate{cs} \vqw{1}\gategroup[2,steps=3,style={inner
sep=2pt,fill=\gatecolor,draw=black},background]{$\proc$} &\gate[2]{l} &\gate[2,style={mystyle, inner sep=0pt}]{}  &\\
\lstick{}  &\gate{\ST_1}  & \gate{\text{S}_2}   & & &  
\end{quantikz}
}
\ \text{\footnotesize otherwise.}\right\}$
    \end{subfigure}
\begin{subfigure}[t]{\textwidth}
\vspace{5mm}
\centering
$
\begin{tabular}{c}
\scalebox{0.6}{
\begin{quantikz}[wire types = {b,b},column sep =4mm]
\lstick{} &  \gate{cs} \vqw{1}\gategroup[2,steps=3,style={inner
sep=2pt,fill=\gatecolor,draw=black},background]{$\proc$} &\gate[2]{l} &\gate[2,style={mystyle, inner sep=0pt}]{} &\\
\lstick{}  & \gate{\tb{if }b\tb{ then }\ST_\tb{true}\tb{ else }\ST_\tb{false}} & & & 
\end{quantikz}
}
\end{tabular}
\longrightarrow
\left\{
\scalebox{0.6}{
\begin{quantikz}[wire types = {b,b},column sep =4mm]
\lstick{} &  \gate{cs} \vqw{1}\gategroup[2,steps=3,style={inner
sep=2pt,fill=\gatecolor,draw=black},background]{$\proc$} &\gate[2]{l} &\gate[2,style={mystyle, inner sep=0pt}]{} &\\
\lstick{}  & \gate{\ST_b} & & & 
\end{quantikz}
}
\
\text{\footnotesize {if $w_\PR^\proc(\ST_b)=1,$}}
\scalebox{0.6}{
\begin{quantikz}[wire types = {b,b},column sep =3mm]
\lstick{}&  \gate[2]{l}\gategroup[2,steps=2,style={inner
sep=2pt,fill=\gatecolor,draw=black},background]{$\proc$}&\gate{cs}\vqw{1}\gategroup[2,steps=1,style ={mystyle, inner sep=0pt},background]{}& \\
\lstick{}&    & \gate{\text{S}_b} & 
\end{quantikz}
}
\ 
\text{\footnotesize otherwise.}\right\}
$
\end{subfigure}
\begin{subfigure}[t]{\textwidth}
\vspace{5mm}
\centering
$\begin{tabular}{c}
\scalebox{0.55}{
\begin{quantikz}[wire types = {b,q,b},row sep = 2mm,column sep=2mm]
\lstick{}&  \gate{cs} \vqw{1}\gategroup[3,steps=3,style={inner
sep=0pt,fill=\gatecolor,draw=black},background]{\proc} &\gate[3]{l}&\gate[3,style ={mystyle}]{}&  \\
\lstick{$n$}  &\gate[2, disable auto height]{\begin{matrix}\tb{qcase }\bar{\q}\el{n}\tb{ of }\\
\{0\to\ST_0,\,1\to\ST_1\}\end{matrix}} && & \\[-2mm]
  &  & & &
\end{quantikz}
}
\end{tabular}
\longrightarrow
\left\{
\scalebox{0.50}{
\begin{quantikz}[wire types = {b,q,b}, column sep=2mm]
& &   \gate[2]{cs[n:=b]} \vqw{1} \gategroup[3,steps=3,style={inner
sep=2pt,fill=\gatecolor,draw=black},background]{\proc} & \gate[3]{l}&[2mm] \gate[2]{cs[n:={1-b}]} \vqw{1} \gategroup[3,steps=1,style={mystyle, inner sep=-1pt},background]{} &  &\\
\lstick{$n$}& &  \vqw{1} & &\vqw{1} & &\\
&  & \gate{\text{S}_b}& &\gate{\text{S}_{1-b}} & & 
\end{quantikz}}
\scalebox{0.8}{
\begin{tabular}{c}
$ \text{if } \exists b \in \{0,1\}\ s.t.$\\
$ w_\PR^\proc(\ST_b)=1\text{ and}$\\
$w_\PR^\proc(\ST_{1-b})=0,$
\end{tabular}}
\scalebox{0.55}{
\begin{quantikz}[wire types = {b,q,b},column sep = 3mm]
\lstick{} &     \gate{cs} \vqw{1} \gategroup[3,steps=4,style={inner
sep=0pt,fill=\gatecolor,draw=black},background]{\proc} &  \gate{cs} \vqw{1} & \gate[3]{l}& \gate[3,style={inner
sep=0pt,mystyle}]{}  &  \\
\lstick{$n$} &  \octrl{1} & \ctrl{1} & & & \\
\lstick{}  & \gate{\text{S}_0}& \gate{\text{S}_1} & &  &
\end{quantikz}}\quad\text{\footnotesize{otherwise.}}\right\}
$
\end{subfigure}
\begin{subfigure}[t]{\textwidth}
\vspace{5mm}
\centering
$
\scalebox{0.8}{\begin{tabular}{c}
anchoring:
\end{tabular}}\begin{tabular}{c}
\scalebox{0.6}{
\begin{quantikz}[wire types = {b,b},column sep ={2.6mm}]
\lstick{}&   \gate{cs} \vqw{1}\gategroup[2,steps=3,style={inner
sep=0pt,fill=\gatecolor,draw=black},background]{\proc} &\gate[2]{l} &\gate[2,style ={mystyle}]{}& \\
   & \gate{\tb{call }\pn{proc'}(\sa);} & & & 
\end{quantikz}}
\end{tabular}
\longrightarrow
\scalebox{0.6}{
\begin{quantikz}[wire types = {q,b,b},column sep = {2.6mm}]
\inputD{a_{|\sa|}^{\pn{proc'}}} & \targ{} &  \ctrl{2} \gategroup[3,steps=3,style={inner
sep=0pt,fill=\gatecolor,draw=black},background]{\proc} & & & \targ{} &  \\
\lstick{}&  \gate{cs} \vqw{-1}  & & \gate[2]{l}&\gate[2,style ={mystyle}]{} &  \gate{cs} \vqw{-1} &  \\
\lstick{}&  & \gate{\text{S}^\pn{proc'}\{\sa/\bar{\q}\}}& & &  & 
\end{quantikz}}
$
\end{subfigure}
\begin{subfigure}[t]{\textwidth}
\vspace{5mm}
\centering
$\scalebox{0.8}{\begin{tabular}{c}
merging:
\end{tabular}}\begin{tabular}{c}
\scalebox{0.6}{
\begin{quantikz}[wire types = {q,b,b},column sep ={2.6mm}]
\lstick{$a_{|\sa|}^{\pn{proc'}}$} & \gategroup[3,steps=3,style={inner
sep=0pt,fill=\gatecolor,draw=black},background]{\proc}& \gate[3]{l} &\gate[3,style ={mystyle}]{}&\\
\lstick{}&   \gate{cs} \vqw{1} &&& \\
   & \gate{\tb{call }\pn{proc'}(\sa);} && & 
\end{quantikz}}
\end{tabular}
\longrightarrow
\scalebox{0.6}{
\begin{quantikz}[wire types = {q,b,b}, column sep = 2.1mm, row sep = 3.5mm]
\lstick{$a_{|\sa|}^{\pn{proc'}}$}& \targ{}  &  \gate[3]{l}\gategroup[3,steps=2,style={inner
sep=0pt,fill=\gatecolor,draw=black},background]{\proc}  & \gate[3,style={mystyle}]{} &  \targ{} & \\
\lstick{}&   \gate{cs} \vqw{-1}  & &  &  \gate{cs} \vqw{-1}  & \\
\lstick{}  & &    &  &  & 
\end{quantikz}}
$
\end{subfigure}
\begin{subfigure}[t]{\textwidth}
\vspace{5mm}
\centering
$\begin{tabular}{c}
\scalebox{0.6}{
\begin{quantikz}[wire types = {b}]
& \gate{[]}\gategroup[1,steps=2,style={inner
sep=0pt,fill=\gatecolor,draw=black},background]{\proc}  &\gate[3,style={mystyle}]{} &
\end{quantikz}
}
\end{tabular}
\longrightarrow
\scalebox{0.6}{
\begin{quantikz}[wire types = {b}] &\gate[3,style={mystyle}]{} &
\end{quantikz}}
$
\end{subfigure}
    \vspace{3mm}
\end{minipage}

\hrulefill

    \caption{Rewrite rules of \optimize{}.}
    \label{fig:optimize-steps}
\end{figure}

 The rewrite rules for the subroutine $\optimize$ on procedure $\proc$ are described in Figure~\ref{fig:optimize-steps}.  This subroutine takes two input parameters: 
 \begin{itemize} 
 \item a first list $l$ of controlled statements, the ones to be optimized,
 \item a second list of controlled statements, called \emph{contextual list} and denoted by $\begin{quantikz} \gate[style={mystyle}]{} \end{quantikz}$,  consisting in controlled statements that are orthogonal to those in $l$ and do not contain recursive procedure calls.
 \end{itemize}
 As described by the last rule of Figure~\ref{fig:compile}, each initial call to $\optimize$  is performed on the singleton list containing one controlled statement $(cs,\ST^{\proc}\{\sa/\bar{\q}\})$ and a contextual circuit equal to the identity circuit.

In Figure~\ref{fig:optimize-steps}, the rules are applied by considering the first controlled statement in the list $l$. We just specify the most interesting case below:
\begin{itemize}
\item for a controlled statement $(cs,\ST_1\ \ST_2)$, two distinct rules can be applied. In the first scenario, where $w_\PR^\proc(\ST_1)=1$, we have that $\ST_1$ contains a recursive procedure call and $\ST_2$ does not (this is a consequence of the $\wi_{\leq 1}$ condition in $\pbp$), or the converse. Depending on this, the rule select on which control statement $(cs,\ST_1)$ or $(cs,\ST_2)$ to perform the optimization and, append the compiled circuit of the other controlled statement to the left or to the right.
\item for a controlled statement with an \tb{if} statement, we precompute the boolean value $b$ (the value is computable), and according to whether the corresponding branch contains a recursive call or not, we either add it to $l$ or concatenate its compiled circuit to the contextual circuit.
\item for a controlled statement with a \tb{qcase} statement, one or two of the branches are added to the list $l$ depending on whether they are both recursive or not. 
\item for a controlled statement with a procedure call $\tb{call }\pn{proc'}(\sa);$ we perform either anchoring or merging depending on whether the ancilla $a_ {|\sa|}^{\pn{proc'}}$ exists or not. 
\item when the list of control statement is empty (i.e., $l=[]$), we compute the contextual list of controlled statements. This list is rearranged into lists of statements that are mutually recursive and compiled by a call to \optimize{}. Such a partition is important to perform merging of procedures with a recursion level lower than that of \proc, and we give further explanation on how it is performed in Appendix~\ref{app:compilation}. 
\end{itemize}

\subsection{Soundness of the algorithm}

In this section, we discuss the validity of the compilation algorithm. One first observation should be that, given a \pbp{} program, the compilation necessarily terminates. For instance, in \compile{} (Figure~\ref{fig:compile}), all rules besides the procedure call rewrite the controlled statement into either a circuit or into instances of \compile{} of smaller statements. In the case a procedure call, the rewriting of the procedure body produces a finite number of calls to procedures of lower recursive level.

In \optimize{}, a recursive procedure will result in a finite number of calls to mutually recursive procedures -- this is ensured by the well-foundedness condition $\wf{}$, that requires that recursive procedure calls reduce the size of the input, therefore procedure calls either reduce the level of recursion or the number of available qubits.

The soundness of the compilation algorithm is ensured by an orthogonality invariant in \optimize{}. Let $cs,\,cs'$ be two control structures. We say that $cs$ and $cs'$ are \emph{orthogonal} if there exists $ i \in\text{dom}(cs)\cap \text{dom}(cs')$ such that $cs(i)  =1 -cs'(i)$. Two controlled statements are orthogonal if their control structures are orthogonal.

\begin{lemma}
\label{lem:inv}
During the compilation of a \pbp{} program $\PR$, for each optimization of a (recursive) procedure $\proc$, all controlled statements in the union of list $l$ and the contextual list $\contL$ are pairwise orthogonal.
\end{lemma}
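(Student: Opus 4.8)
The plan is to prove Lemma~\ref{lem:inv} by induction on the execution of the \optimize{} subroutine, tracking how the pair $(l, \contL)$ of lists of controlled statements evolves under each rewrite rule of Figure~\ref{fig:optimize-steps}. The invariant to be maintained is exactly the statement: every two controlled statements occurring in $l \cup \contL$ are pairwise orthogonal, i.e., their control structures disagree on some common qubit in their domains. I would first establish the base case: by the last rule of Figure~\ref{fig:compile}, each optimization of $\proc$ is initiated with $l$ a singleton list $[(cs, \ST^\proc\{\sa/\bar{\q}\})]$ and $\contL$ the empty (identity) list. A one-element union is trivially pairwise orthogonal, so the base case holds. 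The inductive step is then a case analysis over which rewrite rule of Figure~\ref{fig:optimize-steps} fires on the head $(cs, \ST)$ of $l$.

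The heart of the argument is the \tb{qcase} rule, which is the only rule that alters control structures. Here the head $(cs, \tb{qcase }\bar{\q}\el{n}\tb{ of }\{0\to\ST_0,1\to\ST_1\})$ is replaced — in the ``both branches recursive'' case — by the two controlled statements $(cs[n:=0], \ST_0)$ and $(cs[n:=1], \ST_1)$ (placed into $l$), and in the mixed case one goes to $l$ and the other, after compilation of $\text{S}_b$, joins $\contL$. I would argue: (i) the two new statements $(cs[n:=0],\ST_0)$ and $(cs[n:=1],\ST_1)$ are orthogonal to each other because $n \in \text{dom}(cs[n:=0]) \cap \text{dom}(cs[n:=1])$ and $cs[n:=0](n)=0=1-cs[n:=1](n)$; here one needs that $n \notin \text{dom}(cs)$ so that the two extensions genuinely live over a common fresh index — this follows from the semantic restriction that a \tb{qcase} body cannot access its own control qubit (the set $A\backslash\{n\}$ in the semantics of Figure~\ref{table:operationalsemantics}), together with the fact that the control structures accumulated along the compilation of $\proc$'s body correspond precisely to the nested \tb{qcase}s enclosing the current point, all of which have distinct control qubits on any given branch path; (ii) each new statement is orthogonal to every pre-existing statement $(cs'', \ST'')$ in $(l\cup\contL)\setminus\{(cs,\ST)\}$, because by the induction hypothesis $cs$ and $cs''$ were orthogonal, witnessed by some $i$, and since $i \neq n$ (as $i \in \text{dom}(cs)$ but $n\notin\text{dom}(cs)$) we have $cs[n:=k](i) = cs(i) = 1 - cs''(i)$, so orthogonality is preserved. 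The contextual-list variant is handled identically, using additionally that compiling $\text{S}_b$ into a circuit does not change the control structure label attached to that contextual entry.

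For the remaining rules the argument is lighter. The sequence rule and the \tb{if} rule replace the head $(cs,\ST_1\,\ST_2)$ (resp.\ $(cs,\tb{if }b\tb{ then }\ST_1\tb{ else }\ST_0)$) by controlled statements carrying the \emph{same} control structure $cs$: one of them stays in $l$ with a recursive body, the other is compiled and either prepended/appended as a circuit or — in the \tb{if} case where the taken branch is non-recursive — incorporated into $\contL$ as a controlled statement still labelled $cs$. Since the new labels are all equal to the old label $cs$, and $cs$ was already orthogonal to all other entries, orthogonality is preserved; and a controlled statement is vacuously ``orthogonal to itself'' is not required since after a rule fires at most one copy with label $cs$ remains in $l\cup\contL$ — I would make this bookkeeping precise, noting that the $\wi_{\leq1}$ hypothesis guarantees that at most one of $\ST_1,\ST_2$ (resp.\ at most one of the \tb{if} branches, and here only the taken one matters) is recursive, so we never create two live $cs$-labelled entries from one. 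The anchoring and merging rules do not touch $l$ at all in a way that affects control structures of \emph{remaining} statements: anchoring consumes the head procedure-call statement and emits only circuit material plus a fresh ancilla wire, leaving the rest of $l\cup\contL$ untouched; merging likewise only emits circuit material routing $cs$ into the existing ancilla. The terminal rule $l=[]$ hands $\contL$ off to recursive \optimize{} calls after partitioning into mutually-recursive groups — here I would invoke the induction hypothesis on $\contL$ (pairwise orthogonal) to conclude each group is pairwise orthogonal, and note that each such recursive \optimize{} call then re-enters the induction with its own valid initial configuration.

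The main obstacle I anticipate is the bookkeeping around \emph{why $n \notin \text{dom}(cs)$} at every point where the \tb{qcase} rule fires — i.e., that no qubit index is ever controlled twice along a single compilation path. This is morally guaranteed by the operational semantics (a \tb{qcase} on $\bar{\q}\el{n}$ executes its branches with $n$ removed from the accessible set $A$, so no nested \tb{qcase} inside can control on $n$ again), but translating this semantic no-reuse property into a syntactic/compilation-time invariant on the control structures built by \compile{} and \optimize{} requires care: one must track that the index $n$ appearing in the \tb{qcase} rule of \optimize{} is, on the list of pointers $[1,\dots,\text{(size)}]$ fixed for this compilation, genuinely distinct from every index already recorded in $cs$. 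I would phrase this as an auxiliary invariant carried alongside the orthogonality invariant — ``for the head $(cs,\ST)$ of $l$, every \tb{qcase} control qubit reachable in $\ST$ evaluates to an index outside $\text{dom}(cs)$'' — and discharge it using the $A\backslash\{n\}$ discipline of the semantics together with the observation that \compile{} only ever extends $cs$ via the \tb{qcase} rule. With that auxiliary invariant in hand, the orthogonality invariant propagates through all rules as sketched above, completing the induction.
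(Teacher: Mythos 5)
Your overall strategy --- induction on the run of \optimize{}, with the \tb{qcase} rule as the sole source of new orthogonality and an auxiliary ``no control index is reused'' invariant to justify $n \notin \mathrm{dom}(cs)$ --- is sound, and it is considerably more detailed than what the paper itself offers (the lemma is stated without a formal proof; the paper only remarks that $\wi_{\leq 1}$ forces any two recursive calls into orthogonal branches). Your handling of the \tb{qcase}, sequence and \tb{if} cases is correct: the two extensions $cs[n:=0]$ and $cs[n:=1]$ disagree at $n$ regardless, and $n\notin\mathrm{dom}(cs)$ is exactly what is needed so that the witness index for any \emph{pre-existing} orthogonality is not overwritten.

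There is, however, a genuine gap in your anchoring case. You assert that anchoring ``consumes the head procedure-call statement and emits only circuit material plus a fresh ancilla wire, leaving the rest of $l\cup\contL$ untouched.'' That is not what the rule does: in both Figure~\ref{fig:optimize-steps} and Algorithm~\ref{alg:optimize}, anchoring \emph{appends} the new controlled statement $(\cdot[a_{|\sa|}^{\pn{proc'}}:=1],\,\ST^{\pn{proc'}}\{\sa/\bar{\q}\})$ to the list $l$ --- the anchored body still has to be optimized, and it is carried with a control structure whose domain is the singleton $\{a_{|\sa|}^{\pn{proc'}}\}$. Since that ancilla is fresh, this domain is disjoint from the domain of every other control structure currently in $l\cup\contL$, and under the paper's definition of orthogonality (a \emph{common} index on which the structures disagree) the new entry is orthogonal to nothing. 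Concretely, after processing $\tb{qcase }\bar{\q}\el{1}\tb{ of }\{0\to\tb{call }\proc(\sa);,1\to\tb{call }\proc(\sa);\}$ the list momentarily contains both $(cs[1:=1],\tb{call }\proc(\sa);)$ and $(\cdot[a:=1],\ST^\proc)$, which are not orthogonal in the stated sense. To close the induction you must either refine the invariant (e.g., require pairwise orthogonality only among entries descended from the same anchoring, or extend orthogonality to $\cdot[a:=1]$ modulo the fact that $a$ is set exactly when one of its pairwise-orthogonal defining controls holds), or argue that the violation is transient and harmless because, by the \unif{} condition, every sibling call merges into $a$ before the anchored body is itself decomposed. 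As written, the induction does not go through at this step.
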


This invariant ensures the validity of \optimize{}, and the soundness of the compilation algorithm. It is also a consequence of the $\wi_{\leq 1}$ restriction in $\pbp$, as by the definition of width, at most one recursive call may appear per branch of a recursive procedure. This ensures that two recursive calls on a given procedure always occur in orthogonal branches and can be simply combined in the same ancilla. Given a circuit $C$, we define its semantics $\llbracket C\rrbracket$ naturally as the composition of the semantics of each gate.

\begin{restatable}[Soundness of compilation]{theorem}{thmsoundness}\label{thm:soundcompile}
Given a \pbp{} program $\PR$ and a quantum state $\ket{\psi}\in\mathcal{H}_{2^n}$ we have that $\llbracket \compile{}(\PR,n)\rrbracket(\kpsi)=\llbracket \PR\rrbracket(\ket{\psi})$.
\end{restatable}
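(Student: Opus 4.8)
The plan is to prove soundness by structural induction on statements, generalized appropriately to carry the control-structure bookkeeping. Since \compile{} is defined on controlled statements $(cs,\ST)$, the natural induction hypothesis is a statement about the circuit $C$ produced by $\compile{}$ on $(cs,\ST)$: that $\sem{C}$ agrees with the controlled version of $\sem{\ST}$ on the qubits in $dom(cs)$, i.e. on any basis state $\ket{i}$ of the control wires, $C$ acts as $\sem{\ST}$ (with the appropriate list of pointers substituted) if $cs$ is satisfied by $\ket i$, and as the identity otherwise. First I would set up this relativized claim carefully, making precise what ``the semantics of $\ST$ under list $l$'' means in terms of the big-step relation of Figure~\ref{table:operationalsemantics}, and observing that the set $A$ of accessible qubits in the semantics corresponds to the wires not currently frozen as controls. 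The base cases (\skp{} and unitary application) are immediate from the two base rules of Figure~\ref{fig:compile}. The sequence, classical \tb{if}, and \tb{qcase} cases follow by composing the inductive hypotheses; the \tb{qcase} case is where the control structure is extended by $n:=0$ and $n:=1$ on the two sub-circuits, and one checks that $\octrl{}$ and $\ctrl{}$ on wire $n$ realize exactly $\sum_k \ketbra{k}{k}_n \otimes \sem{\ST_k}$, matching the \tb{qcase} semantics rule.

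The substantive part is the procedure-call case, which splits according to whether $w_\PR^\proc(\ST^\proc)=0$ or not. When the width is $0$, the body contains no recursive calls, $\compile{}$ simply inlines $\ST^\proc\{\sa/\bar{\q}\}$ under $cs$, and by well-foundedness ($\wf$) this inlining terminates; the claim then reduces to the inductive hypothesis on a structurally-unrelated but strictly-lower-in-$\succ_{\PR}$ (or fewer-qubits) statement, so I would run the induction on the lexicographic pair (recursion level of the current procedure context, number of available qubits), which is exactly the termination measure already identified in the ``Soundness of the algorithm'' discussion. When the width is $1$, $\compile{}$ hands control to $\optimize{}$, and here the real work is to show that \emph{anchoring} and \emph{merging} together correctly implement the recursion. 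The key is Lemma~\ref{lem:inv}: all controlled statements in $l\cup\contL$ during an optimization of $\proc$ are pairwise orthogonal. I would prove an invariant for $\optimize{}$ stating that, at every step, the partially-built circuit together with the remaining work list $l$ and contextual list $\contL$ composes to the correct controlled unitary — specifically, that the ancilla $a_k^\proc$ at the moment the body $\ST^\proc$ is compiled under a control on it holds the disjunction $\bigvee$ of all the (pairwise orthogonal, hence at most one satisfied) control structures $cs$ of the calls merged into it, so that ``controlled on $a_k^\proc$'' faithfully reproduces ``controlled on $cs_1$ OR $cs_2$ OR $\dots$'', and the uncomputation legs (the $\targ{}$ gates on the right in the anchoring/merging rules of Figure~\ref{fig:optimize-steps}) restore the ancilla to $\ket 0$.

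The main obstacle, then, is the \tb{qcase}-branch rule inside $\optimize{}$ (the third subfigure of Figure~\ref{fig:optimize-steps}) and the empty-list rule that processes the contextual list: I must argue that relocating a branch from $l$ into $\contL$ when it has width $0$, and later re-expanding $\contL$ into a partition of mutually-recursive blocks each re-fed to $\optimize{}$, preserves both orthogonality (so Lemma~\ref{lem:inv} keeps applying at the next recursion level down) and the overall composed semantics — this is where the commutation of the contextual (orthogonal, recursion-free) circuit with the anchored body must be invoked, since the rewrite rules freely slide the grey ``left-to-compile'' box and the hatched contextual box past each other. I would isolate this as a commutation lemma: two controlled statements with orthogonal control structures commute as unitaries, and a recursion-free controlled statement commutes with the anchored-ancilla block because it touches neither the ancilla nor (by $\unif{}$ and the accessibility discipline) the control qubit in a conflicting way. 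Granting that lemma and Lemma~\ref{lem:inv}, the induction closes: $\sem{\compile{}(\PR,n)}=\sem{\PR}$ follows by instantiating the relativized claim at the top-level call $\compile{}$ on $(\cdot,\ST)$ with the full list $[1,\dots,n]$, where $cs=\cdot$ is trivially satisfied, so the controlled semantics degenerates to $\sem{\PR}$ itself.
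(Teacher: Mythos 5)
Your proposal is correct and follows essentially the same route as the paper's proof: structural induction over statements, with the base and compositional cases checked directly against the semantics, and the procedure-call case handled by validating each \optimize{} rewrite rule via the orthogonality invariant of Lemma~\ref{lem:inv}, which licenses commuting orthogonal controlled statements so that merging correctly accumulates the (at most one satisfied) control structures into the ancilla and the contextual list can be repartitioned and re-fed to \optimize{}. Your explicit relativized induction hypothesis and lexicographic termination measure are somewhat more carefully stated than in the paper, but they formalize exactly the argument the paper carries out diagrammatically.
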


\subsection{Illustrating Example}

\begin{wrapfigure}{r}{0.44\textwidth}
\centering
\vspace{-7mm}
\begin{tabular}{rl}
\ttt{1}&$\tb{decl }\pn{rec}(\bar{\q})\{$\\
\ttt{2}&$\quad \tb{if }|\bar{\q}|>2\tb{ then }$\\
\ttt{3}&$\qquad \tb{qcase }\bar{\q}\el{1}\tb{ of}\{$\\
\ttt{4}& $\qquad \quad 0 \to \tb{call }\pn{rec}(\bar{\q}\ominus\el{1});$\\
\ttt{5}& $\qquad \quad 1\to \tb{qcase }\bar{\q}\el{2}\tb{ of}\{$\\
\ttt{6}& $\qquad  \qquad\qquad 0\to \tb{skip};$\\
\ttt{7}& $\qquad \qquad\qquad 1 \to \tb{call }\pn{rec}(\bar{\q}\ominus\el{1,2});$\\
\ttt{8}&$\qquad\qquad\qquad \}$\\
\ttt{9}&$\qquad\quad\}$\\
\ttt{10}&$\quad \tb{else }\bar{\q}\el{1}\asg \U;\,\}$\\
\ttt{11}&$::\ \tb{call }\pn{rec}(\bar{\q});$\\
\end{tabular}
\caption{Program \texttt{REC}.}
\label{fig:example-program}
\vspace{-9mm}
\end{wrapfigure}

We illustrate the compilation process with the program  $\texttt{REC}$, of Figure~\ref{fig:example-program}. 
Notice that $\texttt{REC}$ is in $\wf \cap \wi_{\leq 1}$ but does not belong to $\pbp$, as the procedure $\pn{\texttt{rec}}$ performs two recursive calls on different sorted sets $\bar{\q}\ominus\el{1}$ and $\bar{\q}\ominus\el{1,2}$. This example thus illustrates that the compilation algorithm is not restricted to $\pbp$. 

Given that \pn{rec} is a recursive procedure, its compilation is performed within \optimize{}.
We denote by the empty box $\begin{quantikz} \gate{} \end{quantikz}$ the procedure statement $\ST^\pn{rec}$ and by the dotted box $\begin{quantikz} \gate[style={mystyle2}]{} \end{quantikz}$ the statement $\tb{call }\pn{rec}(\sa);$. Applying rewrite rules to the statement of \pn{rec} we obtain the transitions:
%

\begin{center}
\includegraphics[scale=1]{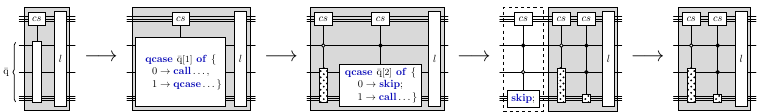}
\end{center}

The first step is obtained by applying the rule of $\tb{if}$ statements, where we assume that $|\bar{\q}|>2$. Afterwards, we apply twice the rule for the $\tb{qcase}$ statement, adding the two qubits $\bar{\q}\el{1}$ and $\bar{\q}\el{2}$ to the control structure. Finally, the compilation of the $\tb{skip}$ statement corresponds to the empty circuit. We denote by $\xrightarrow[]{\texttt{rec}}$ the application of this sequence of rewrite rules, which is used in Figure~\ref{fig:compilation-example}.

\addtolength{\tabcolsep}{0pt}   
\begin{figure}[h!]
\hrulefill

\centering
\includegraphics[scale=1]{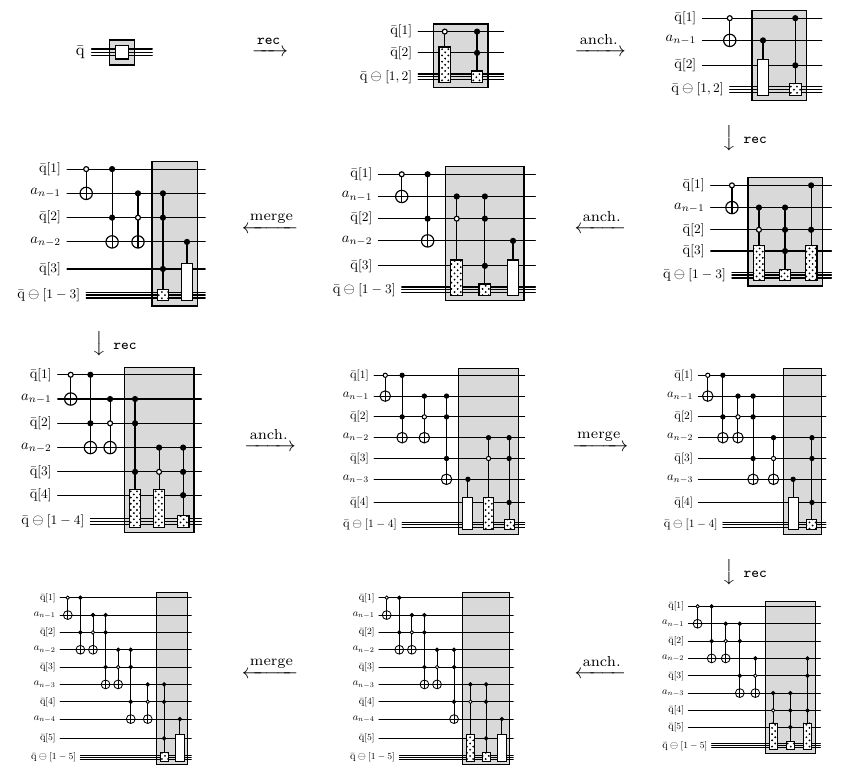}

\hrulefill
\caption{Example of anchoring and merging for the program \texttt{REC} in Figure~\ref{fig:example-program}.}
\label{fig:compilation-example}
\end{figure}

One important thing to notice in Figure~\ref{fig:compilation-example} is that the compilation strategy does not avoid branch sequentialization \emph{locally} but rather \emph{asymptotically} in the construction of the circuit. In other words, the \tb{qcase} statement in rule \texttt{rec} does generate two instances of $\ST^\pn{rec}$ in the circuit in sequence, one for each branch. However, the merging of calls to $\pn{rec}$ on inputs of the same size ensures that only one instance per input size needs to be compiled, and therefore this strategy achieves linear complexity in the number of gates.

\section{Main Results}

\subsection{No Branch Sequentialization}
First, we show that the problem of branch sequentialization is solved in $\pbp$. For that purpose, we show that the size of quantum circuit obtained through compiling a $\pbp$ program is bounded asymptotically by the time complexity of the program.
As the time complexity is the maximum of the complexity of the two branches of a quantum case statement, the branch sequentialization problem is avoided on $\pbp$ programs, as the size of the circuit is asymptotically equal to its time semantics. Given a circuit $C$, let $\# C$ denote its size, i.e., number of gates and wires.

\begin{restatable}[No branch sequentialization]{theorem}{thmnewcompilation}
\label{thm:new-compilation}
Given a program $\PR\in\pbp{}$ and input $n\in\mathbb{N}$, we have that $\# \textnormal{\compile{}}(\PR,n)=O(\level_\PR(n))$ holds.
\end{restatable}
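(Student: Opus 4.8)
The strategy is to bound the size of $\compile(\PR,n)$ by a constant times $\level_\PR(n)$, and the natural way to do this is to set up a structural induction that tracks, for every controlled statement processed by $\compile$ and $\optimize$, how many gates it contributes and how this compares to the number of procedure calls it performs in the operational semantics. Concretely, I would first prove an auxiliary lemma stating that $\compile(\PR,n)$ creates at most one ancilla per pair $(\proc,k)$ with $\proc\in\PR$ and $1 \le k \le n$, and that for each such ancilla the body $\ST^\proc\{\sa/\bar\q\}$ is compiled exactly once (on the fixed list of $k$ pointers fixed by the \unif{} condition). This is where the \unif{} hypothesis is essential: because every procedure call is on the variable $\bar\q$ or on one globally fixed sorted set, the sorted set passed to a call of $\proc$ on a configuration of size $n$ is determined by $n$ alone, so ``same procedure, same size'' genuinely means ``same controlled body to compile'', and the merging rule fires. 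Hence the number of distinct procedure bodies ever compiled is $O(n\cdot|\PR|) = O(n)$, since $|\PR|$ is a constant of the program.

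Next I would bound the cost of compiling a single procedure body. Fix $\proc$ and a size $k$; the body $\ST^\proc$ is a fixed statement of the program (constant size), and running $\optimize$ on it produces: one controlled gate per unitary application, a bounded number of control-structure gates per \tb{qcase} and per anchoring/merging step, plus the contextual list, which by Lemma~\ref{lem:inv} consists of pairwise orthogonal controlled statements that contain no recursive calls and hence compile to circuits of size $O(1)$ each (again because the program text is of constant size, non-recursive sub-statements compile to constant-size circuits, modulo inlining of non-recursive procedure calls, which only happens finitely often by the strict order $\succ_\PR$). The one subtlety is the anchoring/merging ancillas $a_k^{\proc'}$ and the pair of $cs$ gates they introduce: each recursive call in the body contributes $O(1)$ such gates, and by $\wi_{\le 1}$ there is at most one recursive call per branch, so the total is $O(1)$ per body. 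Therefore each of the $O(n)$ compiled bodies contributes $O(1)$ gates and $O(1)$ wires (the ancillas), giving $\#\compile(\PR,n) = O(n)$.

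To turn the $O(n)$ bound into the sharper $O(\level_\PR(n))$ claimed, I would argue that the number of $(\proc,k)$ pairs for which an ancilla is \emph{actually} created is bounded by $\level_\PR(n)$ rather than by all of $\{1,\dots,n\}\times\PR$: an ancilla $a_k^{\proc'}$ is created only if some execution branch actually reaches a call of $\proc'$ on a sorted set of size $k$, and each such reached call contributes $1$ to the time complexity count $m$ in the rule for $\tb{call}$, while the \tb{qcase} rule takes the maximum over branches — exactly mirroring the fact that merging collapses the two branches' calls into one ancilla. So I would prove by induction on the compilation that the number of ancillas created during the optimization of $\proc$ on size $k$ is at most the time-complexity contribution of that call, and sum up: the total number of ancillas (hence of compiled bodies, hence of gates up to the constant $O(1)$ per body) is $O(\level_\PR(n))$.

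\textbf{Main obstacle.} The delicate point is the bookkeeping that ties the ``one ancilla per reached $(\proc,k)$'' count to $\level_\PR(n)$ while accounting for the contextual list: when $l$ empties, $\optimize$ repartitions the contextual statements into mutually-recursive groups and recurses, and one must check this does not re-compile bodies already compiled (so the ancilla table is global across the whole compilation, not reset) and does not blow up the count — i.e., that the recursion on the contextual list corresponds to descending in the strict order $\succ_\PR$ on recursion levels, which terminates after $|\PR|$ steps. I would handle this by making the invariant in the induction global: ``at every point in the run of $\compile$, the number of ancillas created so far is at most the number of distinct (procedure, size) calls reachable in the portion of the program already traversed, counted with the $\max$-over-quantum-branches convention'', and check each rewrite rule of Figures~\ref{fig:compile} and~\ref{fig:optimize-steps} preserves it. The rest is routine: constant-size sub-circuits for the non-recursive fragments, and the $O(1)$-gates-per-body estimate.
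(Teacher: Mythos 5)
Your proposal is correct and follows essentially the same route as the paper's proof: both reduce the circuit size to the number of ancillas, i.e.\ of distinct (procedure name, input size) pairs actually anchored, observe that each anchored body contributes $O(1)$ gates, and use the \unif{} condition (all reached sizes lie on a single arithmetic progression, so the union over the two branches of a \tbt{qcase} is within a constant factor of the maximum) together with a descent along $\succ_\PR$ for the contextual list to bound that count by $O(\level_\PR(n))$. The paper packages this same accounting as a structural induction on the program body combined with an inner induction on a notion of procedure \emph{rank}, which plays exactly the role of your global ancilla-table invariant.
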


One may notice that, in the rules of $\compile$ (Figure~\ref{fig:compile}), the  compilation of a \tbt{qcase} statement generates two controlled statements in sequence. Similarly, the rule of $\optimize$ (Figure~\ref{fig:optimize-steps}) for $\tb{qcase}$ statements appends two controlled statements to the list $l$ in the case where they are both recursive. This does not pose a problem as, anchoring and merging will ensure that the asymptotic complexity of this statement will be given by the maximum complexity of the branches.

\begin{example}
By Theorem~\ref{thm:new-compilation} and Example~\ref{ex:pairs}, we have that
$\#\compile{}(\textnormal{\texttt{PAIRS}},n)=O(n)$.
\end{example}

\subsection{Polynomial-Time Soundness and Completeness}

Having shown in Theorem~\ref{thm:new-compilation} that \pbp{} programs can avoid branch sequentialization, we now turn to the question of whether they constitute an interesting fragment of quantum programs, given that it is restricted by the  $\wf$, $\wi_{\leq 1}$ and $\unif{}$ conditions. We show that the set of \pbp{} programs is sound and complete for quantum polynomial time via an implicit characterization of \fbqp{}, i.e. the set of classical functions that can be approximated with bounded error by a polytime quantum Turing machine~\cite{yamakami2022,HPS23}.

Since the considered programming language does not allow for qubit creation, in order to define functions where the output is larger than the input, we make use of polytime padding. A \emph{polytime padding function} is a function $f:\{0,1\}^\ast\to \{0,1\}^\ast$ computable by a (classical) polytime Turing machine such that $f(x)=xy$, for some $y$ depending only on the length of $x$ (and not the value of $x$). Given a set $\mathcal{F}$ of programs, $\llbracket \mathcal{F} \rrbracket$ denotes the set of functions given by $\llbracket \PR \rrbracket \circ f$, where $\PR\in \mathcal{F}$, $f$ is a polytime padding function, and $\circ$ is the standard function composition. For any $p\in(\frac{1}{2},1]$, we denote by $\llbracket \mathcal{F} \rrbracket_{\geq p}$ the set of functions in $\llbracket \mathcal{F} \rrbracket$ that approximate a classical function with probability at least $p$.


\begin{restatable}[Soundness and Completeness]{theorem}{thmsoundnesscompleteness}
\label{thm:soundness-and-completeness}
$\llbracket \pbp{} \rrbracket_{\geq\frac{2}{3}} = \fbqp{}$.
\end{restatable}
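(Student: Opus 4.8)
The proof splits into the soundness inclusion $\llbracket\pbp\rrbracket_{\geq 2/3}\subseteq\fbqp$ and the completeness inclusion $\fbqp\subseteq\llbracket\pbp\rrbracket_{\geq 2/3}$, both bridged by Yao's theorem, which identifies $\fbqp$ with the functions computed with bounded error by classically-uniform families of polynomial-size quantum circuits. For \emph{soundness}, the first step is to check that $\level_\PR$ is polynomially bounded for every $\PR\in\pbp$: well-foundedness forces each recursive call to strip at least one qubit, so every $\sim_\PR$-chain of recursive invocations has depth at most $n$; $\wi_{\leq 1}$ forces at most one recursive call along any single execution branch of a recursive procedure body (the width of a sequence adds and that of an \tb{if}/\tb{qcase} is the maximum, matching how $\level$ accumulates along a branch); and there are finitely many procedures, stratified by $\succ_\PR$, so an induction over this stratification gives $\level_\PR(n)=\mathrm{poly}(n)$. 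Then Theorem~\ref{thm:new-compilation} yields $\#\compile(\PR,n)=O(\level_\PR(n))=\mathrm{poly}(n)$, and an inspection of the rewrite systems of Figures~\ref{fig:compile} and~\ref{fig:optimize-steps} shows $\compile$ is a classical polynomial-time procedure, so $\{\compile(\PR,n)\}_n$ is classically uniform; by Theorem~\ref{thm:soundcompile} it realises $\llbracket\PR\rrbracket$. Pre-composing the uniform generator with a polytime padding $f$ (a classical preprocessing step) and measuring the output register produces a uniform poly-size family approximating $\llbracket\PR\rrbracket\circ f$ with probability $\geq 2/3$, hence by Yao's theorem a function in $\fbqp$.

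For \emph{completeness}, start from $g\in\fbqp$, fix a polytime QTM $M$ computing $g$ with error $\leq 1/3$ in time $t(n)=\mathrm{poly}(n)$, and use Yao's theorem to get a classically-uniform family $\{C_n\}$ of circuits over a fixed finite gate set, on $s(n)=\mathrm{poly}(n)$ qubits with $N(n)=\mathrm{poly}(n)$ gates, whose description is produced from $1^n$ by a classical polytime machine $G$. The plan is to construct a $\pbp$ program $\PR$ together with a polytime padding $f$ that places $x$ on the input wires of a polynomial \emph{work register} (the $s(n)$ wires of $C_n$) and appends a polynomial \emph{clock register} which the recursion will consume. $\PR$ is built from (i) a top recursive procedure that, at recursion level $k$, applies the $k$-th gate of $C_n$ to the work register and recurses on the single fixed shrinking sorted-set expression $\bar{\q}\ominus\el{-1}$ — so that $\unif$ holds, exactly one recursive call occurs per branch so $\wi_{\leq 1}$ holds, and one clock qubit is stripped so $\wf$ holds; (ii) auxiliary recursive sweeps that address the target wire(s) of the current gate; and (iii) the key observation that the classical data determining gate $k$ — its type and its wire indices, i.e.\ the output of $G$ — is delivered to the unitary-application construct $\q\asg\U^f(\ja)$ through its polynomial-time-approximable function parameter $f\in\mathbb{Z}\to[0,2\pi)$ and integer parameter $\ja$, with the routing itself carried out by \tb{qcase} dispatch. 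Measuring $\llbracket\PR\rrbracket\circ f$ in the computational basis then reproduces $g$ with probability $\geq 2/3$.

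The hard part is the completeness direction, and specifically reconciling three constraints that are individually innocuous in the richer language of~\cite{HPS23}: that $\pbp$ recursion must \emph{consume qubits} rather than decrement an integer clock (so the step counter has to live in a padded register and be peeled), that all recursive calls go through a \emph{single} fixed sorted-set expression (so the work register must be exactly the portion that is never peeled, which constrains the register layout and the termination test), and that each branch of a recursive procedure carries \emph{at most one} recursive call (so the nested sweeps addressing the circuit's wires must be threaded carefully through $\wi_{\leq 1}$). Exhibiting a single well-founded $\unif$/$\wi_{\leq 1}$ recursion scheme that is nevertheless expressive enough to drive an arbitrary polytime uniformity machine via the $f$-parameters — and verifying that the orthogonality and layout conditions are preserved throughout — is where the bulk of the technical effort lies; everything else is bookkeeping on top of Yao's theorem and Theorems~\ref{thm:soundcompile} and~\ref{thm:new-compilation}. (An alternative route for completeness is to start from the completeness construction of~\cite{HPS23} and show its integer parameters and non-$\unif$ calls can be re-encoded as a clock register and a fixed-direction sweep as above; this amounts to the same argument.)
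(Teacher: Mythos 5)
Your soundness half is correct but takes a genuinely different route from the paper: the paper simply observes $\pbp\subsetneq\wf\cap\wi_{\leq 1}\subsetneq\pfoq$ and invokes the soundness theorem of~\cite{HPS23}, whereas you rederive the inclusion from the paper's own machinery (polynomial bound on $\level_\PR$ from $\wf$ and $\wi_{\leq 1}$, then Theorems~\ref{thm:soundcompile} and~\ref{thm:new-compilation} plus uniformity of $\compile$). Your route is more self-contained and has the virtue of actually using the compilation theorems, at the modest extra cost of checking that $\compile$ runs in classical polynomial time; either argument is acceptable.

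The completeness half has a genuine gap. Your primary plan is to take the Yao circuit family $\{C_n\}$ produced by a classical uniformity machine $G$ and simulate it gate-by-gate, claiming that ``the classical data determining gate $k$ --- its type and its wire indices, i.e.\ the output of $G$ --- is delivered to the unitary-application construct $\q\asg\U^f(\ja)$ through its function parameter $f$ and integer parameter $\ja$, with the routing carried out by \tb{qcase} dispatch.'' This does not work in this language. The parameter $f$ only determines the entries of the fixed $2\times 2$ matrix $\sem{\U}(f)(k)$; it cannot select \emph{which} qubit a gate targets, since the target is fixed syntactically by the qubit expression $\sa\el{\ia}$, and integer expressions are limited to variables, constants, $\ia\pm n$ and $\size{\sa}$ --- constant arithmetic on the size of the remaining register. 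Likewise, \tb{if} guards are Boolean combinations of such comparisons and \tb{qcase} branches on \emph{quantum} data, so at recursion depth $k$ the program has no mechanism to evaluate the polytime function $G$ and decide which gate of $C_n$ to apply or where to apply it. This is precisely why the paper does not attempt a direct circuit simulation: its completeness argument simulates the function algebra of~\cite{Yamakami20}, whose composition and recursion schemes align with the language's procedure-call structure, reusing the construction of~\cite[Theorem 5]{HPS23}. Your fallback remark --- re-encode the~\cite{HPS23} construction's integer parameters and non-$\unif$ calls into a clock register and a fixed-direction sweep --- is essentially the right move, but asserting that ``this amounts to the same argument'' skips the actual content of the theorem for $\pbp$: one must verify that the~\cite{HPS23} construction (or Yamakami's schemes) survives the $\unif$ restriction and the loss of integer procedure parameters, which is the only part of the statement not already implied by the results for $\wf\cap\wi_{\leq 1}$.
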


\subsection{Extension to Programs in $\wf \cap \wi_{\leq 1}$}
\label{ss:compwfwi}

The \unif{} restriction  of $\pbp$ can be thought of as ensuring that qubits are consumed in a uniform way. As a consequence, any two procedure calls on inputs of size $n$ will correspond to precisely the same $n$ qubits. This guarantees that two compatible procedure calls can be merged simply by combining control structures in the same ancilla.
Without this constraint (i.e., on $\wf \cap \wi_{\leq 1}$), merging can be done by control-swapping registers into a single procedure call. For $k$ instances of a procedure on input size $n$, this requires $k-1$ controlled swaps, each requiring $O(n)$ operations in the worst case. An example of merging two unitary gates $U$ with controlled-swaps is given in Figure~\ref{fig:merging-with-control-swap}. If we allow for parallelization of gates in the circuit, this can be done in logarithmic depth, as shown in Lemma~\ref{lem:permutation}. An example of a controlled permutation is shown in Figure~\ref{fig:example-of-controlled-permutation}, where vertical dashed lines separate time slices where gates can be applied simultaneously.

\begin{figure}
\centering
\scalebox{0.65}{
$
\left\llbracket
\begin{quantikz}[row sep = {6mm,between origins}, column sep = 4mm]
& \ctrl{1} & \octrl{1} &\\
& \ctrl{1} & \gate[3]{U} &\\
 & \gate[3]{U} & &\\
 & & &\\
 & &\ctrl{-1} &\\
 \lstick{$\ket{0}$} & & &\\
 \lstick{$\ket{0}$} & & &
\end{quantikz}
\right\rrbracket
=
\left\llbracket
\begin{quantikz}[row sep = {6mm,between origins}, column sep = 4mm]
& \ctrl{1} &\octrl{4} & &  &&& &\octrl{4} & \ctrl{1} &\\
& \ctrl{4} & & &\gate[4]{}\permute{4,1,2,3} & \gate[3]{U}&\gate[4]{}\permute{2,3,4,1} & & & \ctrl{4} &\\
 &  & && && & & & &\\
 & & && && & & & &\\
 & &\ctrl{2}& & & && & \ctrl{2} & &\\
 \lstick{$\ket{0}$}& \targ{} &  &\ctrl{1}& \ctrl{-1} &  &\ctrl{-1}&\ctrl{1} & & \targ{} &\\
 \lstick{$\ket{0}$}&   &  \targ{} & \targ{} & & \ctrl{-3} && \targ{} & \targ{} & &
\end{quantikz}
\right\rrbracket
$}
\caption{Merging orthogonal controlled statements.}
\label{fig:merging-with-control-swap}
\end{figure}

\begin{figure}[t]
\centering
\includegraphics[scale=0.8]{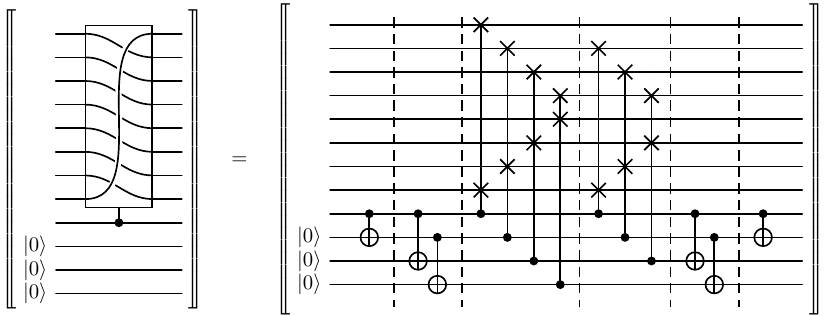}
\caption{Implementation of a controlled permutation in log-depth.}
\label{fig:example-of-controlled-permutation}
\end{figure}

\begin{restatable}{lemma}{permutation}
\label{lem:permutation}Any controlled permutation on $n$ qubits can be performed with a quantum circuit of size $O(n)$ and depth $O(\log n)$.
\end{restatable}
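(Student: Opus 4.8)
The plan is to reduce an arbitrary controlled permutation to a parallel collection of controlled cyclic shifts, and to implement each cyclic shift with the classical triple‑reversal identity together with a fan‑out of the control. First I would decompose the permutation $\pi$ acting on the $n$ qubits into disjoint cycles $\pi=\gamma_1\cdots\gamma_m$. Since the $\gamma_j$ act on pairwise disjoint sets of qubits, their controlled versions can be laid out side by side, so the depth of the whole circuit is the maximum of the depths of the controlled $\gamma_j$ and its size is the sum of their sizes. It therefore suffices to implement one controlled cyclic shift on $k$ qubits in size $O(k)$ and depth $O(\log k)$; since the cycles partition the $n$ qubits, $\sum_j k_j=n$ and the total size stays linear.

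For a cycle $\gamma_j$ of length $k$, identify it with the left cyclic shift by one of the sequence $q_1,\dots,q_k$ of qubits it moves, and use the standard array‑rotation identity: a left cyclic shift by one equals the reversal of $q_2,\dots,q_k$ followed by the reversal of $q_1,\dots,q_k$. A reversal of a block of $k$ qubits is the product of $\lfloor k/2\rfloor$ pairwise $\text{SWAP}$s on the disjoint pairs $(q_1,q_k),(q_2,q_{k-1}),\dots$, hence has size $O(k)$ and, uncontrolled, depth $O(1)$. To obtain the controlled versions while keeping logarithmic depth, I first reduce the control structure $cs$ to a single ancilla: compute the conjunction of its literals into a fresh qubit $c$ by a balanced binary tree of $\text{NOT}$‑conjugated Toffoli gates of size $O(\size{\text{dom}(cs)})=O(n)$ and depth $O(\log n)$. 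Then I copy $c$ onto enough ancillas, via a balanced $\text{CNOT}$ fan‑out tree, so that each of the $\lfloor k/2\rfloor$ controlled $\text{SWAP}$s (Fredkin gates) of a reversal can use a private copy; with private control copies the Fredkin gates act on disjoint wires and run in depth $O(1)$. The fan‑out trees are uncomputed by running them in reverse, as is the conjunction gadget for $c$.

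Assembling the bounds: each cycle of length $k$ contributes $O(k)$ Fredkin/$\text{SWAP}$ gates, $O(k)$ fan‑out $\text{CNOT}$s, and constantly many reversals; summing over the disjoint cycles gives total size $O(n)$, on top of the $O(n)$‑size conjunction gadget for $cs$. The depth is dominated by the conjunction tree, the (two nested) fan‑out trees — one distributing $c$ across cycles, one distributing the per‑cycle copy across that cycle's swaps — and the constantly many $O(1)$‑depth reversal layers, for an overall depth of $O(\log n)$; the construction matches the log‑depth implementation sketched in Figure~\ref{fig:example-of-controlled-permutation}. Correctness is immediate: the two composed reversals realise exactly the cyclic shift $\gamma_j$ whenever $\ket{c}=\ket{1}$, i.e. whenever $cs$ is satisfied, and act as the identity otherwise, while all ancillas are returned to $\ket{0}$.

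The main obstacle is the parallelism bookkeeping rather than any deep idea: because many controlled swaps must share the same genuinely quantum control, that control has to be fanned out and later un‑fanned with $\text{CNOT}$ trees, and one must check that one private control copy is allocated per Fredkin, that the uncomputation is exact, and that the two levels of fan‑out (across cycles and within each cycle) still sum to a linear gate count and only $O(\log n)$ depth. The remaining ingredients — the cycle decomposition, the array‑rotation identity, and the fact that a reversal is a disjoint product of swaps — are routine.
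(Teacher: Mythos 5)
Your proposal is correct and follows essentially the same route as the paper: the paper writes the permutation as a product of two layers of disjoint transpositions (citing this fact directly, where you derive it via the cycle decomposition and the two-reversal identity) and then, exactly as you do, fans the control out to $O(n)$ ancillas in $O(\log n)$ depth and $O(n)$ gates so that each controlled swap gets a private control copy. Your additional bookkeeping (the conjunction tree for a multi-qubit control structure and the uncomputation of the fan-out) is consistent with, and slightly more explicit than, the paper's argument.
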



An extension of \compile{} with controlled-swapping as in Figure~\ref{fig:merging-with-control-swap} used for merging procedures ensures the following bound for $\wf \cap \wi_{\leq 1}$ programs.

\begin{theorem}\label{thm:non-basic-compilation}
For $\PR\in\wf \cap \wi_{\leq 1}$, $\textnormal{\compile{}}(\PR,n)$ results in a circuit with ${O}(\log(n)\cdot\level_\PR(n))$ depth and $O(n\cdot \level_\PR(n))$ size.
\end{theorem}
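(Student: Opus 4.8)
\textbf{Proof plan for Theorem~\ref{thm:non-basic-compilation}.}

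The plan is to lift the analysis of Theorem~\ref{thm:new-compilation} from the $\pbp$ setting to $\wf \cap \wi_{\leq 1}$, tracking the additional overhead introduced by the controlled-swap merging. First I would recall the structure of the argument behind Theorem~\ref{thm:new-compilation}: the key quantity is the number of distinct ``(procedure name, input size)'' pairs encountered during compilation, since anchoring creates one ancilla (and one compiled procedure body) per such pair while merging only appends control structure onto an already-compiled body. Because the program is in $\wi_{\leq 1}$, each recursive branch contributes at most one recursive call, so the orthogonality invariant (Lemma~\ref{lem:inv}) holds, and the number of compiled procedure bodies at each recursion level is bounded by the number of distinct input sizes, i.e.\ $O(n)$ per recursive class; summing over the (constantly many) recursion classes and over input sizes, together with the fact that each call consumes at least one qubit by $\wf$, yields that the total number of anchored bodies is $O(\level_\PR(n))$, with each body of constant gate-size contribution modulo its sub-calls. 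I would state this as a lemma (or invoke it directly from the proof of Theorem~\ref{thm:new-compilation}, which only used $\wf \cap \wi_{\leq 1}$ for everything except the merging cost).

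Next I would isolate precisely where the $\unif{}$ hypothesis was used in Theorem~\ref{thm:new-compilation}: only in the merging step, where under $\unif{}$ two procedure calls of the same size act on literally the same qubits, so merging costs $O(1)$ (just rewriting the control structure into the existing ancilla, as in the ``merging'' rule of Figure~\ref{fig:optimize-steps}). Dropping $\unif{}$, the calls of a given size may act on different registers, so merging the $j$-th new call into the already-anchored instance requires routing its register onto the anchored register via a controlled permutation. If there are $k$ calls of a given ``(name, size)'' pair, this is $k-1$ controlled permutations on $O(n)$ qubits each. By Lemma~\ref{lem:permutation}, each controlled permutation has size $O(n)$ and depth $O(\log n)$. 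The total number of merge operations across the whole compilation is at most the total number of recursive procedure calls, which is $O(\level_\PR(n))$ (each recursive call is either an anchoring or a merging, and both are counted by the time complexity). Hence the merging overhead contributes $O(n)\cdot O(\level_\PR(n))$ to the size and, since the merges along any single execution path compose sequentially, $O(\log n)\cdot O(\level_\PR(n))$ to the depth. Adding the $O(\level_\PR(n))$ gates from the non-merging part (which are $O(1)$-depth each and again compose along paths of length $O(\level_\PR(n))$) gives the claimed $O(n\cdot\level_\PR(n))$ size and $O(\log(n)\cdot\level_\PR(n))$ depth.

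For the depth bound I would be slightly more careful: depth is not simply additive over the circuit, so I would argue along a longest directed path in the compiled circuit's dependency DAG. Every such path corresponds to a nested sequence of procedure invocations and $\tb{qcase}$/$\tb{if}$ branch choices; by the time-complexity definition its length in terms of procedure calls is at most $\level_\PR(n)$, and between two consecutive calls on a path the statements $\ST_i$, $\CNOT$s, controlled unitaries, anchoring $\text{X}$-gates, and one controlled permutation for merging contribute $O(\log n)$ depth (the controlled permutation dominating, by Lemma~\ref{lem:permutation}; everything else is $O(1)$ modulo the universal-gate-set expansion of multi-controlled gates, which I would absorb into the constant or note explicitly as an additive $O(\size{cs})=O(n)$ factor if the cost model for $cs$-controlled gates is not constant — this is the one place to be precise about what the ``size'' of a $cs$-controlled gate is in the paper's model). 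Summing over the $O(\level_\PR(n))$ segments yields the depth bound.

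The main obstacle I anticipate is the bookkeeping for the depth bound in the presence of the recursive $\optimize$ subroutine that, on hitting $l=[]$, recursively compiles the contextual list at lower recursion level: I need to confirm that these nested optimizations do not multiply the depth, i.e.\ that the recursion-level structure $\succ_\PR$ stratifies the circuit into constantly many layers so that the depth contributions from different levels add rather than multiply. This follows because $\succ_\PR$ is a strict order on the (finitely many) procedure names, bounding the nesting depth of $\optimize$ calls by a constant independent of $n$; I would make this explicit, and then the per-level bounds combine to give the stated asymptotics. A secondary subtlety is that the controlled permutations from merging are themselves $cs$-controlled, so Lemma~\ref{lem:permutation} must be applied in its controlled form — which is exactly how it is stated — and the control structure $cs$ has domain size $O(n)$, consistent with the claimed bounds.
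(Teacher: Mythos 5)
Your proposal is correct and follows exactly the route the paper intends: the paper gives no explicit proof of Theorem~\ref{thm:non-basic-compilation}, asserting it as the combination of the Theorem~\ref{thm:new-compilation} size analysis (which, as you note, uses $\unif{}$ only to make merging $O(1)$) with Lemma~\ref{lem:permutation} charging each of the $O(\level_\PR(n))$ merges an extra $O(n)$ size and $O(\log n)$ depth via controlled permutations. Your additional care about the depth bound and the constant stratification of nested \optimize{} calls fills in details the paper leaves implicit, and is consistent with its argument.
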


%
%
%

\subsection{Comparison with Existing Work}
The $\compile$ algorithm strictly improves upon the size bounds of other compilation algorithms~\cite{HPS23}, as illustrated by Table~\ref{table:problems}. In some cases, such as Examples~\ref{ex:chained-substring}~and~\ref{ex:full-adder} given in Appendix~\ref{app:examples-table}, we find families of programs where the gap in complexity (i.e. the degree of the polynomial) can be made arbitrarily large.

In $k$-Chained Substrings (Example~\ref{ex:chained-substring}), we consider the problem of detecting whether an input string contains substrings $y_1,\dots, y_k$ appearing in that order. For a certain choice of substrings $y_i$, we define a \pbp{} program of linear time complexity for which the compilation strategy in~\cite{HPS23} results in circuits of size $\Theta(n^{3k})$. The problem Sum($r$), given in Example~\ref{ex:sumr}, consists of checking if an input $x=x_1\dots x_n$ satisfies $\sum_{i=1}^n x_i=r$.



\begin{table}[h]
\renewcommand{\arraystretch}{1.5}
\noindent
\begin{tabularx}{\textwidth} { 
  | >{\centering\arraybackslash\hsize=0\hsize}X 
   | >{\centering\arraybackslash\hsize=1.2\hsize}X 
  | >{\centering\arraybackslash\hsize=0.9\hsize}X 
  | >{\centering\arraybackslash\hsize=0.9\hsize}X | }
 \cline{3-4}
 \multicolumn{2}{l|}{}&  \multicolumn{2}{c|}{\tbt{Circuit complexity}}\\
 \hline
  \multicolumn{1}{|l|}{\tbt{Problem}}&\tbt{Example}  &\cite{HPS23} & \compile{} \\
    \hline
 \multicolumn{1}{|l|}{Full Adder } &Example~\ref{ex:full-adder}& $\Theta(n)$ & $\Theta(n)$ \\
\hline
 \multicolumn{1}{|l|}{Quantum Fourier Transform } &Example~\ref{ex:qft}   & $\Theta(n^2)$ & $\Theta(n^2)$ \\
 \hline
  \multicolumn{1}{|l|}{$k$-Chained Substrings}& Example~\ref{ex:chained-substring} & $\Theta(n^{3k})$  & $\Theta(n)$ \\
\hline
 \multicolumn{1}{|l|}{Sum($r$), $r\geq 1$} & Example~\ref{ex:sumr} & $\Theta(n^r)$ & $\Theta(n)$ \\
\hline
 \end{tabularx}
\caption{Circuit size complexity bounds.}
\label{table:problems}
\end{table}

%
%


\section{Conclusions and Future Work}

The quantum control statement, while being a central component of programming languages with quantum control flow, usually incurs a remarkable slowdown in the program complexity in any automatic implementation, which poses a problem to the quantum programmer. In this paper, we have demonstrated that such a slowdown can be avoided by restricting the program structure. This was achieved using techniques from quantum implicit computational complexity, which allow not only to selectively reason about efficient programs (polytime soundness) but also to ensure that the techniques are sufficient in principle for any program (polytime completeness).

$\pbp$ is then the first quantum programming language with a compilation strategy that ensures that the quantum control statement can be compiled onto a circuit whose size (i.e., number of gates and wires) is bounded asymptotically by the maximum of the time complexity of its branches. While this language is extensionally complete for {\fbqp} and expressive enough to write quantum algorithms such as QFT or Full Adder in a natural way, it would be interesting to investigate in what ways its expressive power can be improved, and how different languages can also be shown to avoid branch sequentialization.

\section{Acknowledgments}

This work is supported by the the Plan France 2030 through the PEPR integrated project
EPiQ ANR-22- PETQ-0007 and the HQI platform ANR-22-PNCQ-0002; and by the European
projects NEASQC and HPCQS.

\bibliography{references}

\newpage

\appendix

\section{Semantics of expressions}
\begin{figure}[!h]
\centering
\fbox{
\begin{minipage}{0.95\textwidth}
\vspace*{5mm}
\begin{center}
\begin{prooftree}
\hypo{(\ea, l) \Downarrow_{\sem{\tau_1}} m}
\hypo{(\da, l) \Downarrow_{\sem{\tau_2}} n}
\infer2[(Op)]{(\ea\ \op\ \da, l) \Downarrow_{\sem{\op}(\sem{\tau_1},\sem{\tau_2})} \sem{\op}(m,n)}
\end{prooftree}
\qquad
  \begin{prooftree}
    \hypo{(\ia,l) \Downarrow_{\Z} n}
    \infer1[(Unit)]{(\U^f(\ia), l) \Downarrow_{\mathbb{C}^{2\times2}} \sem{\U^f}(n)}
  \end{prooftree}
\end{center}

\begin{center}
\begin{prooftree}
\hypo{}
\infer1[(Cst)]{(n,l)\Downarrow_{\Z} n}
\end{prooftree}
\qquad
\begin{prooftree}
\hypo{(\sa, l)\Downarrow_{\mathcal{L}(\mathbb{N})} [x_1, \ldots, x_m]}
\hypo{(\ia, l) \Downarrow_{\Z} k \in [1,m]}
\infer2[(Rm$_\in$)]{(\sa\ominus\el{\ia}, l) \Downarrow_{\mathcal{L}(\mathbb{N})} [x_1, \ldots, x_{k-1}, x_{k+1}, \ldots, x_m]}
\end{prooftree}
\end{center}

\begin{center}
\begin{prooftree}
\hypo{(\sa,l)\Downarrow_{\mathcal{L}(\mathbb{N})} [x_1,\ldots,x_n]}
\infer1[(Size)]{(\size{\sa},l)\Downarrow_{\Z} n}
\end{prooftree}
\qquad
\begin{prooftree}
\hypo{(\sa, l)\Downarrow_{\mathcal{L}(\mathbb{N})} [x_1, \ldots, x_m]}
\hypo{(\ia, l) \Downarrow_{\Z} k \notin [1,m]}
\infer2[(Rm$_{\notin}$)]{(\sa\ominus\el{\ia}, l) \Downarrow_{\mathcal{L}(\mathbb{N})} [\,]}
\end{prooftree}
\end{center}

\begin{center}
\begin{prooftree}
\infer0[(Var)]{(\bar{\q}, l) \Downarrow_{\mathcal{L}(\mathbb{N})} l}
\end{prooftree}
\qquad
\begin{prooftree}
\hypo{(\sa, l)\Downarrow_{\mathcal{L}(\mathbb{N})} [x_1, \ldots, x_m]}
\hypo{(\ia, l) \Downarrow_{\Z} k \in [1,m]}
\infer2[(Qu$_\in$)]{(\sa\el{\ia}, l) \Downarrow_{\mathbb{N}} x_k}
\end{prooftree}
\end{center}

\begin{center}
\begin{prooftree}
\hypo{(\sa, l)\Downarrow_{\mathcal{L}(\mathbb{N})} [x_1, \ldots, x_m]}
\hypo{(\ia, l) \Downarrow_{\Z} k \notin [1,m]}
\infer2[(Qu$_{\notin}$)]{(\sa\el{\ia}, l) \Downarrow_{\mathbb{N}} 0}
\end{prooftree}
\end{center}
\vspace*{3mm}

\end{minipage}}

\caption{Semantics of expressions.}

\label{table:semnatbool}
\end{figure}

\section{Compilation}
\label{app:compilation}

In this appendix, we describe the \compile{} and \optimize{} routines in more detail.
For simplicity, we provide here the algorithm for programs in $\pbp$.
The version for $\wf\cap\wi_{\geq 1}$ uses additional control swaps in the {\optimize} subroutine and is described in Section~\ref{ss:compwfwi}.
The full definition of \compile{} (Algorithm~\ref{alg:comprec}) takes four inputs: a program $\PR \in \wf \cap \wi_{\leq 1}$, a list of qubit pointers $l$ and a control structure $cs$. We make use of the following syntactic sugar to denote the initial call to \compile{}:
 $$\compile{}(\PR,n)\triangleq \compile{}(\PR,[1,\dots,n],\cdot),$$ 
\noindent where $\PR$ is the program to be compiled, $[1,\ldots,n]$ is list of qubit pointers (initially all qubits), $\cdot$ is an empty control structure, and $\{\}$ an empty dictionary.

\captionsetup{justification=raggedright}

\begin{algorithm}[h]
\begin{algorithmic}[1]
\If{$\ST= \skp$}
\State{$C\leftarrow \mathds{1}$}\Comment{Identity circuit}
\\
\ElsIf{$\ST= \q\asg\U^f(\ja);$\textbf{ and }$(\q,l)\Downarrow_{\mathbb{N}}n$\textbf{ and }$(\U^f(\ja),l)\Downarrow_{\mathbb{C}^{2\times 2}}M$}
\State{$C\leftarrow M(cs,[n])$}\Comment{Controlled gate}
\\
\ElsIf{$\ST= \ST_1\,\ST_2$}
\State{$C\leftarrow \compile(\D::\ST_1,l,cs)\circ \compile(\D::\ST_2,l,cs)$}\Comment{Composition}
\\
\ElsIf{$\ST=\tb{if }\bb\tb{ then }\ST_{\tb{true}}\tb{ else }\ST_{\tb{false}} \textbf{ and }(\bb,l)\Downarrow_{\B}b$}
\State{$C\leftarrow \compile(\D::\ST_b,l,cs)$}\Comment{Conditional}
\\
\ElsIf{$\ST=\qcase{\sa\el{\ia}}{\ST_0}{\ST_1}$\textbf{ and }$(\sa\el{\ia},l)\Downarrow_{\mathbb{N}}n$}\Comment{Quantum case}
\State{$C \leftarrow \compile(\D::\ST_0,l,cs[n:=0])\circ \,\compile(\D::\ST_1,l,cs[n:=1])$}
\\
\ElsIf{$\ST= \call \proc(\sa);$\textbf{ and }$(\sa,l)\Downarrow_{\mathcal{L}(\mathbb{N})}[\,]$}
\State $ C\leftarrow\mathds{1}$\Comment{Nil call}
\\
\ElsIf{$\ST= \call \proc(\sa);$\textbf{ and }$(\sa,l)\Downarrow_{\mathcal{L}(\mathbb{N})}l'\not=[\,]$}
\If{$w_\PR^\proc(\ST^\proc)=0$} \Comment{Non-recursive procedure call}
\State $C\leftarrow \compile(\D::\ST^\proc,l',cs)$
\Else
\State $ C\leftarrow\optimize(\D,[(cs,\ST^{\proc},l)], \proc)$
\EndIf
\EndIf\\
\Return{C}
\end{algorithmic}
\caption{(\compile)\\
\textbf{Input:} $(\D::\ST,l,cs)\in \text{Programs}\times \LN \times (\mathbb{N} \to \{0,1\})$}
\label{alg:comprec}
\end{algorithm}

The aim of $\compile{}$ is to generate the quantum circuit corresponding to $\PR$ on $n$ qubits inductively on the program statement of $\PR$. When the analyzed statement is a recursive procedure call, $\compile$ calls the $\optimize$ subroutine (Algorithm~\ref{alg:optimize}) to perform an optimization of the generated quantum circuit via anchoring and merging. $\optimize$ has the same inputs as $\compile$ with the addition of a list of \emph{controlled statements} $l_\CST$ and the name $\proc$ of the procedure under analysis.

As described in Subsection~\ref{ss:algorithm}, {\optimize} manages a contextual list $\contL$ of controlled statements that do not contain recursive procedure calls.
At the end of \optimize{}, we rearrange the contextual list 
in the following way. Let $\contL{}=[(cs_1,\ST_1,l_1),\dots,(cs_k,\ST_k,l_k)]$ be the state of the contextual list at the end of \optimize{}. We may rewrite each controlled statement as a list of its atomic elements. This transformation, denoted $\text{seq}$, can be described inductively:

\begin{align*}
\text{seq}(cs,\tb{skip};,l) &\triangleq\  [],\\
\text{seq}(cs,\q\asg \text{U}^f(\ja);,l) &\triangleq\  [(cs,\q\asg \text{U}^f(\ja);,l)],\\
\text{seq}(cs,\ST_1\ \ST_2,l) &\triangleq\  \text{seq}(cs,\ST_1,l) @ \text{seq}(cs,\ST_1,l),\\
\text{seq}(cs,\tb{if }\bb\tb{ then }\ST_\tb{true}\tb{ else }\ST_\tb{false},l) &\triangleq\  \text{seq}(cs,\ST_b,l),\text{ if }(\bb,l)\Downarrow_\mathbb{B} b,\\
\text{seq}(cs,\tb{qcase }\q\tb{ of }\{0\to\ST_0,\,1\to\ST_1\},l) &\triangleq\  \text{seq}(cs[n:=0],\ST_0,l)@\text{seq}(cs[n:=1],\ST_1,l)\text{ if }(\q,l)\Downarrow_\mathbb{N} n,\\
\text{seq}(cs,\tb{call }\proc(\sa);,l) &\triangleq [(cs,\tb{call }\pn{proc}(\sa);,l)].
\end{align*}

This separation of statements allows for a partitioning according to the type of procedure call appearing in the statement. Given a list of controlled statements $\mathcal{L}$, we denote by procedure\_split($\mathcal{L}$) the list $[\mathcal{L}_0,\mathcal{L}_1,\dots,\mathcal{L}_m]$ where, for \pn{proc$_1$}, \dots, \pn{proc$_m$} are procedures that are not mutually recursive.
\begin{align*}
\mathcal{L}_0 &\triangleq \{ (cs,\ST,l)\in \mathcal{L}: \not\exists \proc \text{ such that }w_\PR^\proc(\ST^\proc)=1\text{ and }w_\PR^\proc(\ST)=1\}.\\
\mathcal{L}_\pn{proc$_i$} &\triangleq \{ (cs,\ST,l)\in \mathcal{L}: w_\PR^{\pn{proc$_i$}}(\ST)=1\},\quad i=1,\dots,m.
\end{align*}

Given our choice of procedures and the controled statements obtained from seq, this partition is well-defined. Performing these two partitions (first in terms of sequential order of statements, and then according to procedure calls), we are able to rewrite the list $\contL$ in the following way:
\begin{center}
\includegraphics[scale=0.6]{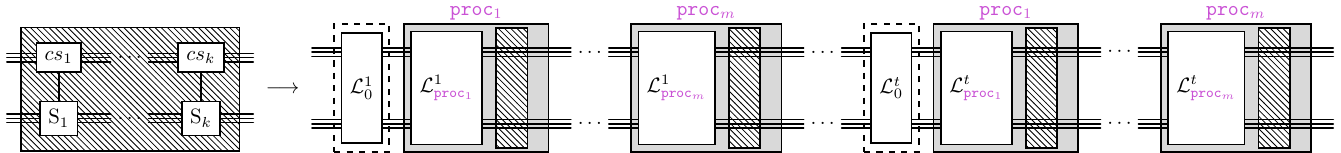}
\end{center}
The different instances of \optimize{} contain calls to procedures that are mutually recursive, which will allow for further anchoring and merging.

\begin{algorithm}
\caption{\textbf{(\optimize)} Build circuit for recursive procedure $\proc$ \\
\textbf{Inputs:} $(\D,l_{\CST},\proc)\in \textnormal{Decl}\times \mathcal{L}(\CST\times \LN) \times \textnormal{Procedures}$}\label{alg:optimize}

\begin{algorithmic}[1]
\State $C_{\text{L}}\leftarrow\mathds{1}$; $C_{\text{R}}\leftarrow\mathds{1}$;$C_{\text{M}}\leftarrow\mathds{1}$; $\contL\leftarrow [\,]$; Anc$\leftarrow \{\,\}$; $\PR\leftarrow \D::\tb{skip};$

\While{$l_{\CST}\neq [\,]$}
\State $(cs,\ST,l) \leftarrow hd(l_{\CST});\ l_{\CST} \leftarrow tl(l_{\CST})$\\

\If{$\ST=\ST_1\,\ST_2$}
\If{$w_\PR^\proc(\ST_1)=1$}
\State $l_{\CST}\leftarrow l_{\CST}\MVAt[(cs,\ST_1,l)];\  C_\text{R}\leftarrow \compile(\D::\ST_1,l,cs)\circ C_\text{R}$
\Else
\State $l_{\CST}\leftarrow l_{\CST}\MVAt[(cs,\ST_2,l)];\  C_\text{L}\leftarrow C_\text{L} \circ\compile(\D::\ST_1,l,cs)$
\EndIf
\EndIf\\

\If{$\ST=\tb{if }\bb\tb{ then }\ST_{\tb{true}}\tb{ else }\ST_{\tb{false}}\textbf{ and } (\bb,l) \Downarrow_\B b$}
\If{$w_\PR^\proc(\ST_b)=1$}
\State $l_{\CST}\leftarrow l_{\CST}\MVAt[(cs,\ST_b,l)]$
\Else
\State  $\contL\leftarrow \contL \MVAt[(cs,\ST_b,l)]$
\EndIf
\EndIf\\

\If{$\ST=\qcase{\q}{\ST_0}{\ST_1}$\textbf{ and }$(\q,l)\Downarrow_{\mathbb{N}}n$}
\If{$w_\PR^\proc(\ST_0)=1 \textbf{ and } w_\PR^\proc(\ST_1)=1$}
\State $ l_{\CST}\leftarrow l_{\CST}\MVAt[(cs[n:=0],\ST_0,l),(cs[n:=1],\ST_1,l)]$
\ElsIf{$w_\PR^\proc(\ST_1)=0$}
\State $ l_{\CST}\leftarrow l_{\CST}\MVAt[(cs[n:=0],\ST_0,l)];\ \contL\leftarrow\contL\MVAt[(cs[n:=1],\ST_1,l)]$
\ElsIf{$w_\PR^\proc(\ST_0)=0$}
\State $l_{\CST}\leftarrow l_{\CST}\MVAt[(cs[n:=1],\ST_1,l)];\ \contL\leftarrow\contL\MVAt[(cs[n:=0],\ST_0,l)]$
\EndIf
\EndIf\\

\If{$\ST=\call \pn{proc'}(\sa);$\tbt{ and }$(\sa,l)\Downarrow_{\mathcal{L}(\mathbb{N})}l'\not=[\,]$
}
\If{$(\pn{proc'},|l'|)\in\No$} 
\State{\tbt{Let }$a = \No[\pn{proc'},|l'|]$\tbt{ in}} \textcolor{gray}{/* compatible procedure exists: merging */}
\State $C_\text{L}\leftarrow C_\text{L} \circ NOT(cs,a);$
\State $C_\text{R}\leftarrow NOT(cs,a) \circ C_\text{R};$ 
\Else 
\State $a \leftarrow \tbt{new }ancilla()$ \textcolor{gray}{/* no compatible procedure: anchoring */}
\State $\No[\pn{proc'},|l'|]\leftarrow a;$
\State{$C_\text{L}\leftarrow C_\text{L}\circ NOT(cs,a);\ C_\text{R}\leftarrow NOT(cs,a)\circ C_\text{R};$}
\State{$l_{\CST}\leftarrow l_{\CST}\MVAt[(\cdot[a=1],\ST^{\pn{proc'}},l')]$}

\EndIf
\EndIf
\EndWhile
\State $T=\max_{(cs,\ST,l)\in\contL}(|\text{sec}(cs,\ST,l)|)$
\For{$1\leq t\leq T$}
\State{$\mathcal{L}\leftarrow \bigcup_{(cs,\ST,l)\in\contL} \text{sec}(cs,\ST,l)[t]$}
\State{$\mathcal{L}_0,\dots \mathcal{L}_m = \text{procedure\_split}(\mathcal{L})$\textcolor{gray}{/* $m =$ number of recursive procedure families */}}
\State{$C_\text{M} \leftarrow C_\text{M} \circ \big(\circ_{(cs,\ST,l)\in\mathcal{L}_0}  \compile(\D::\ST,l,cs) \big)\circ \big(\circ_{i=1}^m \optimize(\D,\mathcal{L}_i,\pn{proc$_i$})\big)$}
\EndFor\\
\Return{$C_\text{L} \circ C_\text{M}\circ C_\text{R}$}
\end{algorithmic}
\end{algorithm}

\section{Proofs}
\label{app:proofs}

\thmsoundness*
\begin{proof}
By induction on the structure of the \pbp{} program $\PR=\D::\ST$. One can check by inspection of each case that the \compile{} rules for non-recursive statements corresponds to the straightforward circuit semantics of quantum programs.

Likewise, the rewriting rules of \optimize{}, given in Figure~\ref{fig:optimize-steps}, can be easily checked using the orthogonality invariant of Lemma~\ref{lem:inv}. For instance, consider the case of the sequential statement $\ST = \ST_1\ \ST_2$. In the case where $w^\proc_\PR(\ST_1)=1$, we can derive the rule for the statement with the following steps:

\begin{center}
\includegraphics[scale=1]{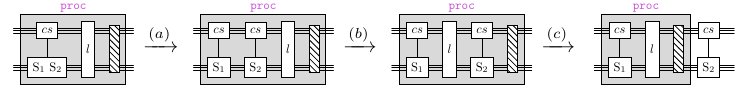}
\end{center}

\noindent
where $(a)$ corresponds to the definition of the sequential statement, in $(b)$ we make use of the fact that $(cs,\ST_2)$ is orthogonal to all controlled statements in $l$ and therefore can be commuted in the circuit, and in step $(c)$ we likewise make use of orthogonality, in this case with the controlled statements in the contextual list. Other case can be inspected to follow a short sequence of steps, such as described for the sequential statement.

For instance, in the case of anchoring, we consider the following composition of rules

\begin{center}
\includegraphics[scale=1]{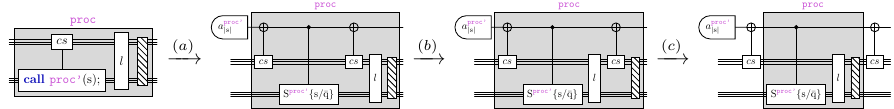}
\end{center}

\noindent
where, likewise, $(a)$ corresponds to the typical circuit semantics of a procedure call (with an added anchoring ancilla), and $(b)$ and $(c)$ make use of the orthogonality of $cs$ with the right side of the circuit. The validity of merging and also be checked:
\begin{center}
\includegraphics[scale=1]{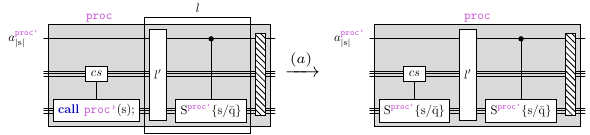}

\includegraphics[scale=1]{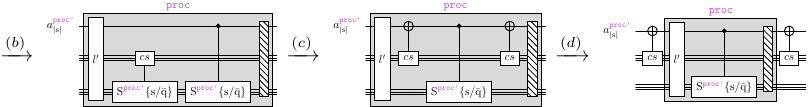}
\end{center}

Step $(a)$ can be seen as the definition of the procedure call. Since we are in the merging scenario, a similar procedure call has already been performed and is anchored to its corresponding ancilla. Without loss of generality (since all controlled statements in $l$ are orthogonal and therefore commute) we assume that this call appears at the end of $l$ (we reuse the name $l$ in the circuit rewriting here as an abuse of notation). Rule $(b)$ simply indicates that since $cs$ is orthogonal to other control structures in $l$, we may move the controlled statements so that they are adjacent, and where we apply step $(c)$ to perform merging. Since $cs$ is orthogonal to $\cdot[a_{|\sa|}^{\pn{proc'}}=1]$ the control is added to the ancilla as expected. Finally, orthogonality of $cs$ with other control structures means that we may move the two controlled-$NOT$s to the edges of the circuit.

Finally, we consider the validity of the rule for the contextual circuit. Consider a list of mutually-orthogonal controlled statements, where the sequential form of $(cs_i,\ST_i)$ is given by the lists of controlled statements $l^{(i)}_1,\dots,l^{(i)}_t$, then we have that:

\begin{center}
\includegraphics[scale=1]{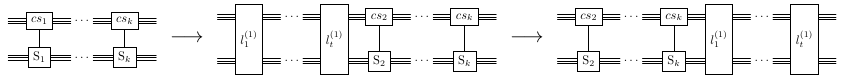}
\end{center}

\noindent
The final step comes from the fact that all $l^{(1)}_j$ are orthogonal to $cs_2$, \dots, $cs_k$ since they include $cs_1$. Using the sequential form of $(cs_2,\ST_2)$, we perform the following transitions:
\begin{center}
\includegraphics[scale=1]{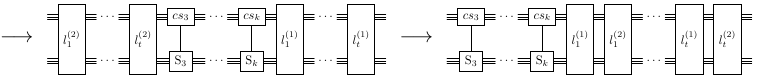}
\end{center}

\noindent
where we make use of the orthogonality between $l_{j}^{(1)}$ and $l_{j'}^{(2)}$. Performing the same transitions for $(cs_3,\ST_3)$, \dots, $(cs_k,\ST_k)$, we obtain the following arrangement:
\begin{center}
\includegraphics[scale=0.7]{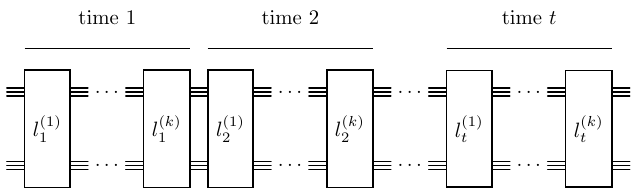}
\end{center}

\noindent
Given a certain $1\leq j\leq t$, we have that all controlled statements in $\cup_{i=1}^k l_j^{(i)}$  (that is, all controlled statements occurring in time $j$) are pairwise orthogonal. Therefore, we may rearrange their order according to their recursivity, and in doing so we may consider each time separately. For instance, in time $1$, let $\mathcal{L}\triangleq\cup_{i=1}^k l_1^{(i)}$, and let $\pn{proc$_1$},\, \pn{proc$_2$},\dots \pn{proc$_m$}$ denote procedures belonging to different recursion families. Then, we perform the following partition:
\begin{align*}
\mathcal{L}_0 &\triangleq \{ (cs,\ST)\in \mathcal{L}: \not\exists \proc \text{ such that }w_\PR^\proc(\ST^\proc)=1\text{ and }w_\PR^\proc(\ST)=1\}.\\
\mathcal{L}_i &\triangleq \{ (cs,\ST)\in \mathcal{L}: w_\PR^{\pn{proc$_i$}}(\ST)=1\},\quad i=1,\dots,m.
\end{align*}
By the definition of the sequential form of each controlled statement, we note that the partition is well-defined (e.g. there are no statements containing calls to more than one procedure). Therefore, we are able to rearrange $\mathcal{L}$ and perform calls to \optimize{} in the following way:
\begin{center}
\includegraphics[scale=0.8]{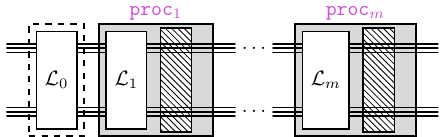}
\end{center}

Performing this operation on each time block and composing the circuits we obtain the rule given in Section~\ref{s:compilation}. This concludes the proof.
\end{proof}

We introduce the notion of rank of a procedure for use in the proof of Theorem~\ref{thm:new-compilation}.
\begin{definition}[Rank]
Set $\max(\emptyset)\triangleq 0$.
Given a program \PR, the rank of a procedure $\proc \in \PR{}$, denoted $rk_\PR(\proc)$, is defined as follows:

\scalebox{0.9}{
$
rk_\PR(\proc)\triangleq
\begin{cases}
0, &\text{ if }\neg (\exists \pn{proc'},\ \proc \succeq_\PR \pn{proc'}),\\
\max_{\proc \succeq_\PR \pn{proc'}} \{rk_\PR(\pn{proc'})  \}, &\text{ if }\exists \proc',\ \proc \succeq_\PR \pn{proc'} \wedge \neg (\proc \sim_\PR \proc),\\
1+ \max_{\proc\succ_\PR \pn{proc'} }\{rk_\PR(\pn{proc'}) \},&\text{ if }\proc \sim_\PR \proc.
\end{cases}
$}
\end{definition}

\thmnewcompilation*

\begin{proof}

The theorem can be shown by structural induction on the program body. All cases are straightforward except the one of the quantum control case. We make use of the following two facts regarding \pbp{} programs:
\begin{enumerate}[(a)]
\item The size of the circuit is directly proportional to its total number of \emph{unique} procedure calls (in the sense required for merging), and
\item a recursive procedure call results in $O(n)$ unique calls to procedures of the same rank. This is because unique calls may only differ on procedure name (of which there is a constant amount) or input size (for which there is a linear number of possibilites).
\end{enumerate}

%


Consider a quantum control statement $\ST=\tb{qcase }\q\tb{ of }\{0\to\ST_0,\,1\to\ST_1\}$ appearing in \optimize{} in the context of a recursive procedure \proc{}. By (a), the circuit size for $\ST_0$ and $\ST_1$ are proportional to the (total) number of procedure calls in each statement, separately.  We check that the number of ancillas created for $\ST$ is bounded by the maximum number of ancillas between $\ST_0$ and $\ST_1$.

 We proceed by induction on the rank $r$ of the procedure. The base case of $r=1$ is given by (b), and so we may consider $r>1$. For the inductive case, we consider two scenarios:
\begin{itemize}
\item  $w_\proc^\PR(\ST_0)=w_\proc^\PR(\ST_1)= 1$.
 In this case, both $\ST_0$ and $\ST_1$ are of rank $r$, and all their recursive procedure calls may be merged. Therefore, the asymptotic number of such calls is bounded between the maximum between $\ST_0$ and $\ST_1$ (consider that, if there is no overlap between the ancillas needed, their number is still bounded linearly). Applying the IH on the procedures of rank $r-1$ we obtain the desired result.
\item $w_\proc^\PR(\ST_0)=0$ and $w_\proc^\PR(\ST_1)= 1$. In this case, $\ST_0$ contains calls to procedures of rank $r'<r$ whereas $\ST_1$ contains calls to procedures of rank $r$. According to \optimize{}, statement $\ST_0$ is kept in the contextual circuit until no more statements are recursive relative to $\proc$. The statements of rank $r'$ which are present in $\ST_0$ are then merged with the equivalent procedures that were derived from $\ST_1$ and also added to $\contL$. The number of procedures of rank $r'$ is bounded asymptotically by the maximum between those in $\ST_0$ and $\ST_1$, therefore we obtain our result.\qedhere 
\end{itemize}
\end{proof}


\thmsoundnesscompleteness*
\begin{proof}
Since $\pbp \subsetneq \wf \cap \wi_{\leq 1}\subsetneq \pfoq{}$, with $\pfoq$ the language of~\cite{HPS23}, we have that $\llbracket \pbp{}\rrbracket \subseteq\llbracket  \wf \cap \wi_{\leq 1}\rrbracket \subseteq \llbracket \pfoq{}\rrbracket$ and, by $\pfoq{}$ soundness~\cite[Theorem 3]{HPS23}, it also holds that $\llbracket \pbp{}\rrbracket_{\geq \frac{2}{3}}\subseteq \fbqp{}$. Completeness can be proven by showing that $\pbp$ can simulate the function algebra in~\cite{Yamakami20}, known to be complete for quantum polynomial time. The proof can be done using the same construction as in~\cite[Theorem 5]{HPS23}.\end{proof}

\permutation*
\begin{proof}
Any permutation can be written as the composition of two sets of disjoint transpositions, and therefore any permutation can be performed in constant time, using two time steps~\cite{MN01}. To perform a \emph{controlled} permutation, it suffices to create $O(n)$ ancillas with the correct controlled state, which can be done in $O(\log n)$ depth with $O(n)$ gates.
\end{proof}

\section{Examples of Table~\ref{table:problems}}
\label{app:examples-table}

\begin{example}[Quantum Full Adder]
\label{ex:full-adder} Let $\textnormal{\texttt{ADD}}$ denote the following $\pbp{}$ program, where the following syntactic sugar:
\[\textnormal{TOF}(\q_1,\q_2,\q_3)\triangleq \tb{qcase }\q_1 \tb{ of }\{0\to\tb{skip};,1\to \textnormal{CNOT}(\q_2,\q_3)\}\]
encodes the Toffoli gate, i.e. the multi-controlled $NOT$ gate.
\begingroup
\addtolength{\jot}{0mm}
\begin{align*}
\ttt{1}& \quad \decl \pn{fullAdder}(\bar{\q})\{ \\
\ttt{2} & \qquad \tb{if }|\bar{\q}|>3 \tb{ then }
\codecomm{$\bar{\q}\el{1}=a$, $\bar{\q}\el{2}=b$, $\bar{\q}\el{-2}=\ket{0}$ and $\bar{\q}\el{-1}=c_\text{in}$}\\
\ttt{3}& \qquad \quad\textnormal{TOF}(\bar{\q}\el{1},\bar{\q}\el{2},\bar{\q}\el{-2})\\
\ttt{4}& \qquad \quad\textnormal{CNOT}(\bar{\q}\el{1},\bar{\q}\el{2})\\
\ttt{5}& \qquad \quad\textnormal{TOF}(\bar{\q}\el{2},\bar{\q}\el{-1},\bar{\q}\el{-2})\ \codecomm{$c_\text{out}=(a\cdot b)\oplus (c_\text{in} \cdot (a\oplus b)) $}\\
\ttt{6}& \qquad \quad\textnormal{CNOT}(\bar{\q}\el{2},\bar{\q}\el{-1})\ \codecomm{$s=a \oplus b\oplus c_\text{in}$}\\
\ttt{7}& \qquad \quad\textnormal{CNOT}(\bar{\q}\el{1},\bar{\q}\el{2})\\
\ttt{8}& \qquad \quad\tb{call }\pn{fullAdder}(\bar{\q}\ominus\el{1,2,-1});\\
\ttt{9}& \qquad \tb{else skip};\}\\
\ttt{10}&\quad::\ \tb{call }\pn{fullAdder}(\bar{\q});
\end{align*}
\endgroup
Given a carry-in bit $c_\text{in}$, we have that $\llbracket \textnormal{\texttt{ADD}}\rrbracket(\ket{a_n b_n \dots a_1 b_1 0^{n}c_\textnormal{in}})= \ket{a_n b_n \dots a_1 b_1 c_\text{out} s_1 \dots s_n},$ where $c_\text{out}$ encodes the carry-out bit and $s_i$ encodes the $i$-th sum bit. Given that \pn{fullAdder} performs one recursive call to an input containing three fewer qubits, we have that $\level_\textnormal{\texttt{ADD}}(n)=\lfloor \frac{n}{3}\rfloor + 1$.
\end{example}

\begin{example}[$k$-Chained Substrings]\label{ex:chained-substring}


Consider a program for detecting a substring $001$ occurring $k$ times in an input. For the case $k=1$, we define a program with procedures $\pn{a$_i$}$, $\pn{b$_i$}$, $\pn{c$_i$}$, $\pn{d$_i$}$ and $\pn{$\oplus$}$, with the graph:
\begin{center}
\includegraphics[scale=0.8]{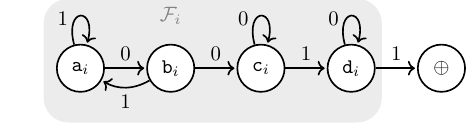}
\end{center}
The two outgoing edges of a node indicate the two branches of a \tb{qcase} statement in the corresponding procedure, controlled on the first input qubit. We denote by $\mathcal{F}_i$ the diagram containing nodes $\pn{a$_i$}$, $\pn{b$_i$}$, $\pn{c$_i$}$ and $\pn{d$_i$}$ and all edges between them. Procedures consist only of a \tb{qcase} statement, for the exception of $\pn{$\oplus$}$:
\begin{align*}
&\tb{decl }\pn{a$_i$}(\bar{\q})\{\,\tb{qcase }\bar{\q}\el{1}\tb{ of }\{0\to\tb{call }\pn{b$_i$}(\bar{\q}\ominus\el{1});,\,1\to\tb{call }\pn{a$_i$}(\bar{\q}\ominus\el{1});\, \}\,\},\\
&\tb{decl }\pn{b$_i$}(\bar{\q})\{\,\tb{qcase }\bar{\q}\el{1}\tb{ of }\{0\to\tb{call }\pn{c$_i$}(\bar{\q}\ominus\el{1});,\,1\to\tb{call }\pn{b$_i$}(\bar{\q}\ominus\el{1});\, \}\,\},\\
&\tb{decl }\pn{c$_i$}(\bar{\q})\{\,\tb{qcase }\bar{\q}\el{1}\tb{ of }\{0\to\tb{call }\pn{c$_i$}(\bar{\q}\ominus\el{1});,\,1\to\tb{call }\pn{d$_i$}(\bar{\q}\ominus\el{1});\, \}\,\},\\
&\tb{decl }\pn{d$_i$}(\bar{\q})\{\,\tb{qcase }\bar{\q}\el{1}\tb{ of }\{0\to\tb{call }\pn{d$_i$}(\bar{\q}\ominus\el{1});,\,1\to\tb{call }\pn{$\oplus$}(\bar{\q}\ominus\el{1});\, \}\,\},\\
&\tb{decl }\pn{$\oplus$}(\bar{\q})\{\,\bar{\q}\el{-1}\asg \text{NOT};\, \}
\end{align*}
The program body is a call to procedure \pn{a$_i$} on input $\bar{\q}$, which results in a program that performs the transformation $\ket{\bar{x}y}\mapsto \ket{\bar{x}(y\oplus b)}$ where $b\in\{0,1\}$ is 1 if and only if $\bar{x}$ contains at least $1$ instance of $001$ as a substring.

For a general $k$, we consider the program $\PR_k$ defined by
\begin{center}
\includegraphics[scale=0.8]{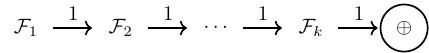}
\end{center}
where an arrow from $\mathcal{F}_i$ to $\mathcal{F}_{i+1}$ indicates an arrow from \pn{d$_{i}$} to \pn{a$_{i+1}$}. The final program then consits of a call to procedure \pn{a$_1$} on input $\bar{\q}$. It is easy to check that $\PR_k\in\pbp{}$, and that $\level_{\PR_k}(n)=n$, for any $k$.
\end{example}

\begin{example} \label{ex:sumr} Let $\text{Sum}(r)$ be the decision problem of checking if an input bitstring $x\in\{0,1\}^n$ satisfies $\sum_{i=1}^n x_i = r$. For instance, if $r=3$, we may define a program \textnormal{\texttt{SUM\_3}} as:
\begin{align*}
\ttt{1} &\quad \tb{decl }\pn{zero}(\bar{\q})\{\,\tb{qcase }\bar{\q}\el{1}\tb{ of }\{0\to\tb{call }\pn{zero}(\bar{\q}\ominus\el{1});,\,1\to\tb{call }\pn{one}(\bar{\q}\ominus\el{1});\, \}\,\},\\
\ttt{2} &\quad\tb{decl } \pn{one}(\bar{\q})\{\,\tb{qcase }\bar{\q}\el{1}\tb{ of }\{0\to\tb{call } \pn{one}(\bar{\q}\ominus\el{1});,\,1\to\tb{call }\pn{two}(\bar{\q}\ominus\el{1});\, \}\,\},\\
\ttt{3} &\quad\tb{decl } \pn{two}(\bar{\q})\{\,\tb{qcase }\bar{\q}\el{1}\tb{ of }\{0\to\tb{call } \pn{two}(\bar{\q}\ominus\el{1});,\,1\to\tb{call }\pn{three}(\bar{\q}\ominus\el{1});\, \}\,\},\\
\ttt{4} &\quad\tb{decl } \pn{three}(\bar{\q})\{\\
\ttt{5} &\quad\quad\tb{if }|\bar{\q}|=1\tb{ then }\\
\ttt{6}&\qquad \qquad\tb{call }\pn{$\oplus$}(\bar{\q});\\
\ttt{7}&\qquad\tb{else}\\
\ttt{8}&\qquad\qquad\tb{qcase }\bar{\q}\el{1}\tb{ of }\{0\to\tb{call } \pn{three}(\bar{\q}\ominus\el{1});,\,1\to\tb{skip};\,\}\,\},\\
\ttt{9} &\quad\tb{decl }\pn{$\oplus$}(\bar{\q})\{\,\bar{\q}\el{-1}\asg \textnormal{NOT};\, \}\\
\ttt{10}&\quad::\ \tb{call }\pn{zero}(\bar{\q});
\end{align*}
We have that $\level_\textnormal{\texttt{SUM\_3}}(n)=n$.
\end{example}

\end{document}